\def\dd{\mathinner{.\,.}}
\newcommand{\Depth}{\textsf{\small Depth}\xspace}
\newcommand{\Witness}{\textsf{\small Witness}\xspace}
\newcommand{\Lcode}{\textsf{\small LastCode}\xspace}
\newcommand{\LcodeInt}{\textsf{\small LastCodeInt}\xspace}
\newcommand{\Pcode}{\textsf{\small PrefCode}\xspace}
\newcommand{\Scode}{\textsf{\small SufCode}\xspace}
\newcommand{\OPST}{\textsf{\small OPST}\xspace}
\newcommand{\OPSTMP}{\textsf{\small OPST-MP}\xspace}
\newcommand{\OPSTCP}{\textsf{\small OPST-CP}\xspace}
\newcommand{\BAMP}{\textsf{\small BA-MP}\xspace}
\newcommand{\BACP}{\textsf{\small BA-CP}\xspace}
\newcommand{\MOPP}{\textsf{\small MOPP}\xspace}
\newcommand{\MOPPM}{\textsf{\small MOPP-Miner}\xspace}
\newcommand{\HOU}{\textsc{HOU}\xspace}
\newcommand{\SOL}{\textsc{SOL}\xspace}
\newcommand{\ECG}{\textsc{ECG}\xspace}
\newcommand{\TRA}{\textsc{TRA}\xspace}
\newcommand{\TEM}{\textsc{TEM}\xspace}
\newcommand{\WHA}{\textsc{WHA}\xspace}
\newcommand{\CCT}{\textsc{CCT}\xspace}
\newcommand{\WAF}{\textsc{WAF}\xspace}
\newcommand{\dollar}{\texttt{\$}\xspace}
\newcommand{\Locus}{\textsf{\small Locus}\xspace}
\newcommand{\Fdown}{\textsf{\small FirstDown}\xspace}
\newcommand{\Slink}{\textsf{\small SufLink}\xspace}
\newcommand{\cO}{\mathcal{O}}
\newtheorem{theorem}{Theorem}
\newtheorem{lemma}{Lemma}
\newtheorem{example}{Example}
\newtheorem{observation}{Observation}
\title{Scalable Order-Preserving Pattern Mining}
\author[1]{Ling Li$^*$}
\author[2]{Wiktor Zuba$^*$}
\author[1]{Grigorios Loukides}
\author[2,3]{Solon P.\ Pissis}
\author[4]{\\Maria Matsangidou}
\affil[1]{King's College London, London, UK}
\affil[2]{CWI, Amsterdam, The Netherlands}
\affil[3]{Vrije Universiteit, Amsterdam, The Netherlands}
\affil[4]{CYENS--Centre of Excellence, Nicosia, Cyprus}
\begin{document}

\maketitle
\renewcommand\thefootnote{} 
\footnotetext{$^*$ denotes equal contribution.}

\begin{abstract}
    Time series are ubiquitous in domains ranging from medicine to marketing and finance. Frequent Pattern Mining (FPM) from a time series has thus received much attention. This general problem has been studied under different matching relations determining whether two time series match or not. Recently, it has been studied under the \emph{order-preserving} (OP) matching relation stating that a match occurs when two time series have the same relative order (i.e., ranks) on their elements. Thus, a \emph{frequent OP pattern} captures a trend shared by sufficiently many parts of the input time series. Here, we propose exact, highly scalable algorithms for FPM in the OP setting. Our algorithms employ an \emph{OP suffix tree} (OPST) as an index to store and query time series efficiently. Unfortunately, there are no practical algorithms for OPST construction. Thus, we first propose a novel and practical $\cO(n\sigma\log \sigma)$-time and $\cO(n)$-space algorithm for constructing the OPST of a length-$n$ time series over an alphabet of size $\sigma$. 
    We also propose an alternative faster OPST construction algorithm running in $\cO(n\log \sigma)$ time using $\cO(n)$ space; this algorithm is mainly of theoretical interest. 
    Then, we propose an exact $\cO(n)$-time and $\cO(n)$-space algorithm for mining all \emph{maximal} frequent OP patterns, given an OPST. This significantly improves on the state of the art, which takes $\Omega(n^3)$ time in the worst case. We also formalize the notion of \emph{closed} frequent OP patterns and propose an exact $\cO(n)$-time and $\cO(n)$-space algorithm for mining all closed frequent OP patterns, given an OPST. We conducted experiments using real-world, multi-million letter time series showing that our $\cO(n\sigma \log \sigma)$-time OPST construction algorithm runs in $\cO(n)$ time on these datasets despite the $\cO(n\sigma \log \sigma)$ bound; that  our frequent pattern mining algorithms are up to orders of magnitude faster than the state of the art and natural Apriori-like baselines; and that OP pattern-based clustering is effective. 
\end{abstract}

\section{Introduction}\label{sec:intro}

A \emph{time series} is a sequence of data points indexed by time,  
which are often recorded at successive, equally-spaced points in time. As a data type, a time series is thus simply a \emph{string} over the alphabet of real numbers~\cite{acmsurv1}, integers~\cite{tm_book}, or even characters (e.g., of ``strong'', ``medium'', or ``weak'' type~\cite{ntpminer_tkdd}). 
Many application domains feature time series, including medicine, where they model e.g., electrocardiogram (ECG) recordings or electroencephalography (EEG) data~\cite{DBLP:journals/pvldb/0001L22};  
sensor networks, where they model e.g., sensor readings~\cite{sdmts}; and finance, where they model e.g., quarterly revenue, monthly sales, or stock prices~\cite{tkde_opp}.

In all these application domains, mining a time series is useful, as it can discover actionable patterns,  such as 
heart beats in ECG data~\cite{DBLP:journals/pvldb/0001L22}, sleep spindles in EEG sleep data~\cite{DBLP:journals/pvldb/0001L22};  electricity consumption profiles of devices, or correlations between traffic jams in sensor data~\cite{vldb21}; and interesting stock  co-movements in financial data~\cite{tan_tkde}. Towards this goal, the problem of pattern mining (a.k.a motif discovery) from a time series has been studied for over twenty years (see~\cite{acmsurv1} for a survey). This general problem asks for \emph{patterns} (e.g., substrings, subsequences, or trends), which occur frequently in an input time series, based on a \emph{matching relation} which dictates 
whether two time series match or not.  

There are numerous algorithms for dealing with this problem  
most of which fall into two categories based on the matching relation they adopt. The algorithms in the first category (e.g.,~\cite{ntpminer_tkdd,bigdata23,eswa1})  are either applied to symbolic sequences~\cite{ntpminer_tkdd} (i.e., strings of characters) directly, or first 
transform the (numerical) input time series into symbolic sequences~\cite{bigdata23,eswa1}. They all   
adopt \emph{string matching} relations which state that all occurrences of a pattern need to have exactly the same characters in the same order to match, 
except possibly wildcard characters. 
To discover the patterns, these algorithms resort to frequent string~\cite{takeaki} or frequent 
sequence~\cite{DBLP:journals/csur/MabroukehE10} mining. 
Thus, they  typically find substrings or subsequences  occurring sufficiently frequently. The algorithms in the second category (e.g., \cite{valmod,DBLP:journals/pvldb/0001L22,distance_dtw1, distance_dtw3_kdd}) adopt a \emph{distance-based matching} relation stating that all occurrences of a pattern need to be ``similar'', according to a distance function, to match. These algorithms typically find the largest set of patterns that match according to the matching relation. The drawback of both categories of algorithms is that they may not capture \emph{trends} effectively. This is because such trends are often not preserved well after the transformation of the input time series into a symbolic sequence~\cite{tkde_opp,xindong_opp}; or because such trends may not be captured by patterns which are at small distance but nonetheless correspond to very different trends. 

In response, a very recent class of pattern mining methods~\cite{xindong_opp, tkde_opp, tkdd_opp, topkopp}  utilizes the well-established \emph{order-preserving} (OP) matching relation~\cite{DBLP:journals/ipl/KubicaKRRW13,
DBLP:journals/tcs/CrochemoreIKKLP16}. 
This property states that two time series are \emph{order-preserving}  if and only if they have the same relative order (i.e., ranks) on their elements. 
For example, the  time series  
$X=(4, 2, 5, 5, 1)$ and $Y=(5, 2, 7, 7, 0)$
are OP, as they have the same sequence of ranks 
$(3,2,4,4,1)$  
on their elements 
(e.g., $4$ and $5$ is \emph{the third smallest} element in $X$ and $Y$, respectively).  
This sequence of ranks is called an \emph{OP pattern}. Thus, an OP pattern represents the collection of \emph{all} time series with the same trend, captured by the sequence of their ranks. This is useful to derive the ``shape'' of these time series, even when there are small fluctuations in the data, as is common in practice~\cite{acmsurv1}. The classic \emph{frequent pattern mining} (FPM) problem extends naturally to the OP setting~\cite{xindong_opp,tkde_opp}: Given a string $w$ of length $n$ over a totally ordered alphabet of size $\sigma$ and a frequency (minimum support) threshold $\tau$, return all OP patterns with frequency at least $\tau$ in $w$. 

\begin{example}
\label{example1} 
The time series $S$ below models the average monthly price of a stock from the electric power sector over a 20-month period. The time series 
 $S_1=(56, 57, 62, 59, 58)$ and $S_2=(63, 64, 68, 67, 66)$  are OP because they have the same sequence of ranks on their elements, i.e., OP pattern, $p=(1, 2, 5, 4 , 3)$ and hence the same trend (i.e., the price increased as winter approached and then decreased as spring approached). Pattern $p$ is frequent for $\tau=2$, as it corresponds to two time series, $S_1$ and $S_2$, which are part of $S$. 

\begin{figure}[!ht]
\centering 
\includegraphics[width=0.65\linewidth]{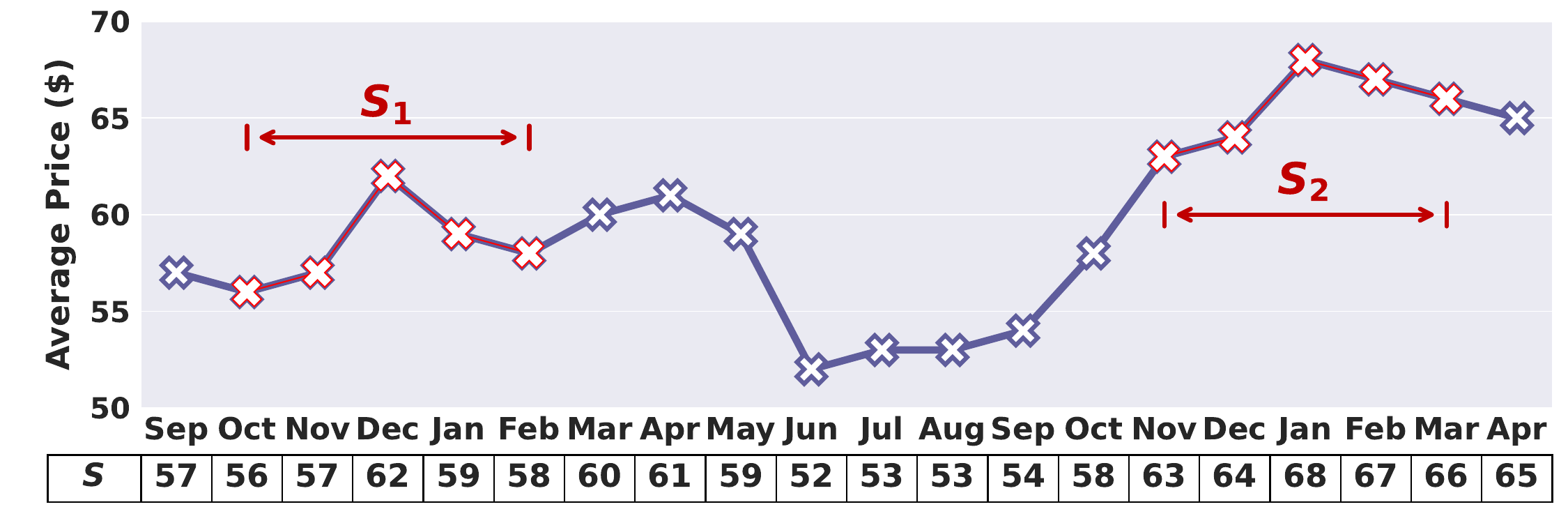}
\end{figure}

\end{example}

A prime example among the class of OP pattern mining methods is the  \MOPPM~\cite{xindong_opp} algorithm (see Section~\ref{sec:related}), which mines  \emph{maximal} frequent OP patterns (i.e., those that cannot be extended to either side and still remain frequent). 

\begin{example}[cont'd from Example~\ref{example1}] 
The OP pattern $p$ in Example~\ref{example1} is maximal because it cannot be extended to either side and still have frequency at least $\tau=2$. Specifically,
if we prepend to $S_1$ the element (letter) on its left and do the same for $S_2$ we obtain $S'_1=(57,56, 57, 62, 59, 58)$ and $S'_2=(58, 63, 64, 68, 67, 66)$ whose OP patterns are 
$p'_1=(2,1,2,5,4,3)$ and $p'_2=(1,2,3,6,5,4)$, respectively, and have frequency $1 < \tau$. Similarly, if we 
append to $S_1$ the letter on its right and do the same for $S_2$, we obtain $S''_1=(56, 57, 62, 59, 58, 60)$ and $S''_2=(63, 64, 68, 67, 66, 65)$ whose OP patterns are $p''_1=(1, 2, 6, 4, 3, 5)$ and $p''_2=(1, 2, 6, 5, 4, 3)$, respectively, and have frequency $1 < \tau$.\label{example2}
\end{example}
 
Although \MOPPM utilizes efficient methods for scanning time series, its time complexity is in $\Omega(n^3)$ in the worst case, and it \emph{does not scale to large time series} in practice.

Our goal here is to propose exact, highly scalable algorithms for FPM in the OP setting. We would like to have mining algorithms with \emph{both} theoretical guarantees (near-linear time and linear space) and practically efficient implementations. These characteristics are of utmost importance, as time series are ever-increasing in size. The two key ideas of our approach are to: (I) 
develop a time- and space-efficient data structure (\emph{index}) for storing and querying time series; and (II) 
exploit compact representations of OP patterns that take $\cO(n)$ space.

\paragraph{Contributions.}~Our work makes four main contributions. 

\begin{enumerate}
    \item We observe that the \emph{order-preserving suffix tree} (\OPST), introduced in~\cite{DBLP:journals/tcs/CrochemoreIKKLP16}, is essential to designing efficient algorithms for  frequent OP pattern mining. Although other OP indexes exist~\cite{DBLP:conf/esa/GagieMV17,DBLP:conf/icalp/0002PST21,DBLP:conf/isaac/KimC22}, they either focus only on \emph{pattern matching} queries and do not have the full suffix tree functionality that is essential for \emph{frequent pattern mining}; or they do not admit an \emph{efficient} construction. We thus design a theoretically and practically efficient algorithm for constructing the \OPST of an input string (time series) $w$ of length $n$ over a totally ordered alphabet of size $\sigma$, 
which takes $\cO(n\sigma\log \sigma)$ time using $\cO(n)$ space. 
We were motivated to design this algorithm because there are no practical algorithms for \OPST construction; the state-of-the-art  algorithm of~\cite{DBLP:journals/tcs/CrochemoreIKKLP16} is of theoretical interest and to the best of our knowledge has no available implementation. Notably, the $\cO(n\sigma\log \sigma)$ time bound is never attained in our extensive experiments, where our implementation always runs in $\cO(n)$ time irrespective of the alphabet size $\sigma$. 
Thus, our $\cO(n\sigma\log \sigma)$-time algorithm is the first practical algorithm for constructing an $\OPST$. Note that beyond pattern mining, it can be used directly for many other OP analysis tasks; see~\cite{DBLP:journals/tcs/CrochemoreIKKLP16}. 

As a bonus, we provide an $\cO(n\log\sigma)$-time and $\cO(n)$-space algorithm, which is mainly of theoretical interest. Interestingly, this construction algorithm is also theoretically faster than the state of the art~\cite{DBLP:journals/tcs/CrochemoreIKKLP16} for small alphabets. Specifically, the algorithm in~\cite{DBLP:journals/tcs/CrochemoreIKKLP16} has time complexity $\cO(n\sqrt{\log n})$, which is larger than that of our algorithm when $\log\sigma=o(\sqrt{\log n})$.

\item We design an exact algorithm for mining   all \emph{maximal} frequent OP patterns of $w$. By definition, every frequent OP pattern is included in a maximal frequent OP pattern; and the total number of all maximal frequent OP patterns is $\cO(n)$. Our algorithm works in $\cO(n)$ time and space provided that the $\OPST$ of $w$ has been constructed. The main idea is to perform a careful bottom-up traversal of $\OPST$ exploiting the fact that the maximality property  
propagates upward in the tree. By utilizing $\OPST$, we avoid the time series scans of the state-of-the-art \MOPPM algorithm~\cite{xindong_opp}, which takes $\Omega(n^3)$ time in the worst case. Also, as the complexities of our algorithm do \emph{not} depend on the frequency threshold, it can mine patterns of very low frequency efficiently, unlike \MOPPM. 

\item We formalize the notion of \emph{closed} frequent OP patterns. Informally, these are all frequent OP patterns that \emph{cannot} be extended to either side and still have the \emph{same} frequency.    
Importantly, every maximal OP pattern is also closed and the total number of all closed frequent OP patterns is $\cO(n)$. For instance, $p$ in Example~\ref{example1} is a closed frequent OP pattern because it cannot be extended to the left or right and still have frequency $\tau=2$, as explained in Example~\ref{example2}. We design an exact algorithm to compute all closed frequent OP patterns of an  input string $w$. 
Here, unfortunately, the closedness property does not propagate upward in the \OPST of $w$. 
Thus, our main idea is to perform a careful bottom-up traversal enhanced by a novel application of lowest common ancestor queries~\cite{DBLP:conf/latin/BenderF00}. Our algorithm works in $\cO(n)$ time using $\cO(n)$ space provided that $\OPST$ has already been constructed.

\item  We present experiments using $6$ real-world, multi-million letter time series showing that our $\cO(n\sigma \log \sigma)$-time \OPST construction algorithm runs in $\cO(n)$ time on these datasets, despite the $\cO(n\sigma \log \sigma)$ bound, and that our frequent pattern mining algorithms are up to orders of magnitude faster than \MOPPM and natural Apriori-like baselines. Finally, we also present a case study on $2$ publicly available datasets and on a proprietary dataset demonstrating the effectiveness of clustering based on maximal frequent OP patterns. 

\end{enumerate}

\section{Preliminaries}

We view a time series as a string $w=w[0\dd n-1]$ of length $n$ over an integer alphabet $\Sigma$ of size $|\Sigma|=\sigma$. This models any totally ordered alphabet (e.g., that of reals), as clearly the letters of $w$ can be mapped onto the range $[0,\sigma)$ after $\cO(n\log\sigma )$-time preprocessing (e.g., using AVL trees).  

By $p(w)$ we define the function that maps $w$ to the rightmost occurrence $j$ of the largest element of $w[0\dd n-2]$ that is at most equal to $w[n-1]$ (i.e., the \emph{predecessor} of $w[n-1]$). If there is no such $j$ (because $w[n-1]$ is the smallest element), then $p(w)=\perp$. More formally, $p(w)$ is the largest 
$j\leq n-2$ 
such that $w[j]=\max\{w[k]: k \leq n-2, w[k]\leq w[n-1]\}$; if there is no such $j$, then $p(w)=\perp$. By $s(w)$ we define the function that maps $w$ to the rightmost occurrence $j$ of the smallest element of $w[0\dd n-2]$ that is at least equal to $w[n-1]$ (i.e., the \emph{successor} of $w[n-1]$). If there is no such $j$, then $s(w)=\perp$. Similarly, $s(w)$ is the largest 
$j\leq n-2$ 
such that $w[j]=\min\{w[k]: k \leq  n-2, w[k]\geq w[n-1]\}$; if there is no such $j$, then $s(w)=\perp$. For implementation purposes, we map the special symbol $\perp$ to integer $-1$.

\begin{example}
Let $w=5~2~6~5~1~4$ of length $n=6$. 
We have $p(w) = 1$ because $w[1]=2$ is the predecessor of $w[5]=4$ in $w[0\dd 4]$. Since we have only one occurrence of $2$, this one is the rightmost. We have $s(w) = 3$ because $w[3]=5$ is the rightmost occurrence of the successor of $w[5]=4$ in $w[0\dd 4]$. We have two occurrences of $5$ ($w[0]$ and $w[3]$); we pick $3$.
\end{example}
We define and use throughout the following two \emph{codes}~\cite{DBLP:journals/tcs/CrochemoreIKKLP16}: 
\begin{align*}
\Lcode(w)=&(p(w),s(w))\text{; and}\\ 
\Pcode(w)=&\Lcode(w[0\dd 0])\cdot \Lcode(w[0\dd 1])\cdot\\
&\ldots\cdot\Lcode(w[0\dd n-1]),
\end{align*}
as the sequence of the $\Lcode$'s of all prefixes of $w$. 

\begin{example}
For $w=4~2~5~5~1$, $\Lcode(w)=(\perp,1)$ and
$\Pcode(w)=(\perp,\perp)(\perp,0)(0,\perp)(2,2)(\perp,1)$.
\end{example}

Two strings $x$ and $y$ of the same length are called \emph{order-preserving} (OP), denoted by $x \approx y$, if and only if the relative order of their letters is the same. More formally, for all $i,j\in[0,|x|)$, it holds that $x[i]\leq x[j] \iff y[i]\leq y[j]$.

\begin{example}\label{bg:ex1}
For $x=4~2~5~5~1$ and $y=5~2~7~7~0$, $x\approx y$. 
\end{example}

The following lemma from~\cite{DBLP:journals/tcs/CrochemoreIKKLP16} links the OP pattern notion with $\Lcode$ and $\Pcode$. Indeed, we use $\Pcode$ throughout to represent OP patterns.

\begin{lemma}[\cite{DBLP:journals/tcs/CrochemoreIKKLP16}]\label{lem:ois}
Let $x$ and $y$ be two strings. Then
\begin{enumerate}
    \item $x \approx y \iff x[0\dd |x|-2] \approx y[0\dd |y|-2] \textbf{ and } \Lcode(x)=\Lcode(y)$.
    \item $x \approx y \iff \Pcode(x)=\Pcode(y)$.
\end{enumerate}
\end{lemma}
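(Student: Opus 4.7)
The plan is to prove Part~1 directly from the definitions of $p(\cdot)$, $s(\cdot)$, and $\Lcode$, and then derive Part~2 by a straightforward induction on length using Part~1. For the inductive step of Part~2, the decomposition $\Pcode(w)=\Pcode(w[0\dd n-2])\cdot \Lcode(w)$ and equality of two $\Pcode$'s forces equality both of the prefix $\Pcode$'s and of the final $\Lcode$, which by the inductive hypothesis and Part~1 yields $w_1\approx w_2$; the forward direction is symmetric. So the real work is in Part~1.

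For the forward direction of Part~1, assume $x\approx y$ (so in particular $|x|=|y|=:n$). That $x[0\dd n-2]\approx y[0\dd n-2]$ is immediate by restricting the defining equivalence $x[i]\le x[j]\iff y[i]\le y[j]$ to $i,j\le n-2$. To show $\Lcode(x)=\Lcode(y)$, I would argue $p(x)=p(y)$ and $s(x)=s(y)$ separately by translating each ingredient of the definitions of $p$ and $s$ through the OP relation: "$x[k]\le x[n-1]$" becomes "$y[k]\le y[n-1]$"; the maximality (resp.\ minimality) of $x[p(x)]$ (resp.\ $x[s(x)]$) among such elements transfers to $y$; and the rightmost-position tie-break is preserved because equal values in $x$ remain equal in $y$. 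The case $p(x)=\perp$ (no element of the prefix is $\le x[n-1]$) also transfers verbatim, and likewise for $s$.

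The main obstacle is the backward direction of Part~1, where from $x[0\dd n-2]\approx y[0\dd n-2]$ and $\Lcode(x)=\Lcode(y)$ I must recover $x[i]\le x[j]\iff y[i]\le y[j]$ for all $i,j\in[0,n)$. When both $i,j\le n-2$, the prefix OP relation handles this directly; the case $i=j=n-1$ is trivial. The nontrivial cases are $j=n-1,\,i<n-1$ (using $p$) and its mirror $i=n-1,\,j<n-1$ (using $s$). For the former, if $p(x)=p(y)=k\ne\perp$, then by definition $x[i]\le x[n-1]$ holds \emph{iff} $x[i]\le x[k]$: the "$\Leftarrow$" is because $x[k]\le x[n-1]$, and the "$\Rightarrow$" is because $x[k]$ is the maximum element of $x[0\dd n-2]$ that is $\le x[n-1]$. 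The analogous equivalence holds in $y$ with the same $k$, and the middle step $x[i]\le x[k]\iff y[i]\le y[k]$ is the prefix OP relation applied to indices $i,k\le n-2$. If instead $p(x)=p(y)=\perp$, then every prefix element of $x$ is strictly greater than $x[n-1]$, and likewise in $y$, so both sides of the required equivalence are false. The case using $s$ is entirely symmetric. Assembling these cases yields $x\approx y$, completing Part~1 and hence, via the induction sketched above, Part~2.
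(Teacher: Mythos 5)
Your proof is correct. The paper itself does not prove \cref{lem:ois}: it is cited verbatim from Crochemore et al.\ \cite{DBLP:journals/tcs/CrochemoreIKKLP16}, so there is no internal proof to compare against. That said, your argument is the natural one and checks out in every detail. The forward direction of Part~1 correctly uses that $x\approx y$ forces the candidate set $\{k\le n-2 : x[k]\le x[n-1]\}$, the subset of it attaining the maximum, and hence its rightmost element, to coincide with the corresponding objects for $y$ (and symmetrically for $s$). The backward direction correctly pivots the comparison with position $n-1$ through the position $k=p(x)=p(y)$ (respectively $k=s(x)=s(y)$), using the two directions of the defining max/min property of $p$ and $s$, then invokes the prefix OP relation on the pair $(i,k)$ with both indices $\le n-2$; the $\perp$ cases are handled correctly. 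Using $p$ for the case $j=n-1$ and $s$ for the mirrored case $i=n-1$ covers both orientations of the comparison, which is necessary since OP requires the biconditional for ordered pairs, not unordered ones. Part~2 then follows by the induction you sketch, using the decomposition $\Pcode(w)=\Pcode(w[0\dd n-2])\cdot\Lcode(w)$; the length-$1$ base case (where both sides hold vacuously, since $\Pcode$ of any single letter is $(\perp,\perp)$) is implicit but unproblematic.
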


\begin{example}[cont'd from Example~\ref{bg:ex1}]
$\Lcode(x)=\Lcode(y)=(\perp,1)$ and $\Pcode(x)=\Pcode(y)$.
\end{example}

\section{The OP Suffix Tree}

The OP suffix tree of a string $w$ over alphabet $\Sigma$, introduced in~\cite{DBLP:journals/tcs/CrochemoreIKKLP16} and denoted here by $\OPST(w)$, is a compacted trie of the following family $\Scode(w)$ of sequences:
\begin{align*}
\Scode(w)=&\{\Pcode(w[0\dd n-1])\dollar,\\
&\Pcode(w[1\dd n-1])\dollar,\ldots,\\
&\Pcode(w[n-1\dd n-1])\dollar\},\end{align*}

\noindent where each sequence is the \Pcode of a suffix of $w$ and $\$\notin\Sigma$ is a delimiter; inspect \cref{fig:OIST} for an example. 

$\OPST(w)$ has exactly $n$ leaf nodes. The internal nodes of $\OPST(w)$ with at least two children are called \emph{branching} nodes (e.g., $v_4$ in \cref{fig:OIST}). Every branching node $v$ represents $\Pcode(w[i\dd j])$ and stores a pointer $\Slink(v)=u$, called \emph{suffix-link}, to the node $u$ representing $\Pcode(w[i+1\dd j])$ (e.g., $\Slink(v_4)=v_1$ in \cref{fig:OIST}); node $u$ is called \emph{suffix-link node} and it may be branching or non-branching. The root node, the branching nodes, the non-branching suffix-link nodes, and the leaf nodes form the set of \emph{explicit} nodes. 
All the remaining nodes (dissolved in the compacted trie) are called \emph{implicit}. 

For an implicit node $v$, its explicit descendant, denoted by $\Fdown(v)$, is the top-most explicit node in the subtree of $v$.  If $v$ is explicit, $\Fdown(v)=v$.
The \emph{string depth} of a node $v$ is the length of $\Pcode(x)$, where $x$ is a substring of $w$ represented by node $v$ (e.g., it is $3$ for $v_4$ in \cref{fig:OIST}). The \emph{locus} of a node $v$ is defined as the pair $(\Fdown(v),d)$, where $d$ is the string depth of $v$; the root node represents the empty string of length $0$. 
By $\Locus(x)$, we denote a function that outputs the locus of a node corresponding to substring $x$ of $w$. 
In \cref{fig:OIST}, 
$\Locus(5~5~1)=\Locus(4~4~2)=(v_4,3)$. \cref{lem:locus} is a direct implication of \cref{lem:ois} and the $\OPST(w)$ construction.

\begin{lemma}[\cite{DBLP:journals/tcs/CrochemoreIKKLP16}]\label{lem:locus}
Two substrings $x$ and $y$ of $w$ are OP if and only if $\Locus(x)=\Locus(y)$ in $\OPST(w)$.
\end{lemma}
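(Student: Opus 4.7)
The plan is to derive the claim directly from \cref{lem:ois}(2) together with the definition of $\OPST(w)$ as a compacted trie over $\Scode(w)$. By \cref{lem:ois}(2), $x \approx y$ iff $\Pcode(x) = \Pcode(y)$, so it suffices to show that, for substrings $x$ of $w$, the locus $\Locus(x)$ is a faithful encoding of $\Pcode(x)$: two substrings of $w$ share a locus in $\OPST(w)$ if and only if their $\Pcode$s coincide.

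First I would establish a basic correspondence: every substring $x$ of $w$ appears as a prefix of some $\Pcode(w[i\dd n-1])\dollar \in \Scode(w)$ (namely one where $w[i\dd i+|x|-1] \approx x$; such an $i$ exists because $x$ is itself a substring of $w$), and conversely any prefix of a string in $\Scode(w)$ that does not contain $\dollar$ equals $\Pcode(y)$ for some substring $y$ of $w$. Since $\OPST(w)$ is a compacted trie of $\Scode(w)$, each such $\Pcode(x)$ determines a unique point in the tree --- either an explicit node, or an implicit node sitting inside some edge --- and this point depends only on $\Pcode(x)$, not on the particular occurrence chosen.

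Next, I would verify that the pair $(\Fdown(v), d)$, where $d$ is the string depth, uniquely identifies this point: the top-most explicit descendant $\Fdown(v)$ determines on which edge (or at which explicit node) the point lies, and the string depth $d$ then pins down exactly how far down from the parent that point is. Combined with the previous paragraph, this gives, for any substrings $x,y$ of $w$,
\[
\Locus(x) = \Locus(y) \iff \Pcode(x) = \Pcode(y),
\]
and chaining with \cref{lem:ois}(2) yields the lemma in both directions.

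The only mildly delicate step is the uniqueness of the point corresponding to $\Pcode(x)$ inside the compacted trie, because an implicit node lives on an edge and one must check that $(\Fdown(v),d)$ never collides across distinct edges or explicit nodes. This follows from the fact that $\Fdown(v)$ is the unique top-most explicit descendant (so the edge containing the implicit node is determined), and on that edge the labels strictly increase in length, so the depth $d$ pins down a single implicit position. No further combinatorial work is needed.
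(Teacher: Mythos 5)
Your proof is correct and follows exactly the route the paper intends: the paper states that \cref{lem:locus} ``is a direct implication of \cref{lem:ois} and the $\OPST(w)$ construction,'' and your argument does precisely that — reduce to $\Pcode$ equality via \cref{lem:ois}(2), then observe that $\Pcode(x)$ for a substring $x$ of $w$ is realized as a prefix of some member of $\Scode(w)$ and hence as a unique point in the compacted trie, with $(\Fdown(v),d)$ serving as an injective encoding of that point. You have merely filled in the routine details that the paper (and the cited source) leave implicit; there is no gap and no divergence in approach.
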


Each explicit node $u$ stores its string depth $\Depth(u)$ and a witness occurrence (position) $\Witness(u)$ of a substring it represents. In particular, every leaf node corresponding to $\Pcode(w[i\dd n-1])\dollar$ stores $i$ as its witness. In addition, each explicit node stores a (possibly empty) set of outgoing edges. Consider one such edge $(u,v)$. Let $i=\Witness(v)$ and $d=\Depth(u)$. The edge $(u,v)$ stores the code $\Lcode(w[i\dd i+d])$, which we call \emph{edge label}. The set of edges outgoing from a node are sorted by the standard lexicographic order assuming that $\texttt{\$}$ $<$ $\perp$ and $\perp$ $<$ $i$, for every $i\in[0,n-2]$. 
Only explicit nodes and their outgoing edges are stored. The tree has $\Theta(n)$ explicit nodes and edges. 

\begin{lemma}[\cite{DBLP:journals/tcs/CrochemoreIKKLP16}]
For any string $w$ of length $n$, the total size of $\OPST(w)$ is $\Theta(n)$.\label{lem:OPSTsize}
\end{lemma}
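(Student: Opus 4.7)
The plan is to bound the number of explicit nodes and edges of $\OPST(w)$ separately, since by construction these are the only objects actually stored, with only $\cO(1)$ extra data per object (namely $\Depth$, $\Witness$ for nodes and a single $\Lcode$ pair for edges). The lower bound $|\OPST(w)|=\Omega(n)$ is immediate from the definition of $\Scode(w)$: the set contains $n$ terminated sequences, one per suffix of $w$, all distinct because of the \dollar\ delimiter, so the compacted trie has exactly $n$ leaves.

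For the upper bound, I would count each of the four categories that comprise the explicit nodes. The root contributes $1$. The number of branching nodes is at most $n-1$ by the standard compacted-trie argument: in a rooted tree whose $n$ leaves are all descendants and whose internal branching nodes each have at least two children, there are at most $n-1$ branching nodes. The leaves contribute exactly $n$. The only nontrivial category is the non-branching suffix-link nodes; but such a node exists only if it is the target $\Slink(v)$ of some branching node $v$. Since there are at most $n-1$ branching nodes and each has a unique suffix-link target, the number of non-branching suffix-link nodes is at most $n-1$. Summing gives a total of $\cO(n)$ explicit nodes.

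The number of edges of a compacted trie equals the number of explicit nodes minus one, and is therefore also $\cO(n)$. Each explicit node stores $\Depth$ and $\Witness$ in $\cO(1)$ machine words, and each edge stores one $\Lcode$ pair in $\cO(1)$ words. The sorted lists of outgoing edges across all nodes have total length equal to the number of edges, contributing $\cO(n)$. Adding everything yields $|\OPST(w)|=\cO(n)$, which together with the lower bound gives $\Theta(n)$.

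The only step I expect to require care is the bound on non-branching suffix-link nodes, because it relies on the specific choice made in this paper to make suffix-link targets explicit even when they would otherwise be implicit; the rest of the argument is a routine compacted-trie size accounting.
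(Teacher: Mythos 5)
Your proof is correct, and the accounting is exactly what one would expect; note that the paper itself does not reprove this lemma but cites it from Crochemore et al.~\cite{DBLP:journals/tcs/CrochemoreIKKLP16}, so there is no in-paper argument to compare against. The one step that genuinely requires the structure of $\OPST$ (rather than a vanilla compacted trie) is the one you flagged: the non-branching explicit nodes. Your bound is right --- each such node is by definition the suffix-link target of some branching node, only branching nodes store suffix links (a non-branching suffix-link node does not itself spawn a further suffix-link target), and there are at most $n-1$ branching nodes, so there are at most $n-1$ such extra explicit nodes. One small wording nit: you write that each branching node has ``a unique suffix-link target,'' but what you need (and what the argument actually uses) is that each branching node has a \emph{single} target; distinct branching nodes may well share a target, which only helps the upper bound. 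With that read, the node count is at most $1 + (n-1) + n + (n-1) = \cO(n)$, the edge count is one less, and the per-node/per-edge storage is $\cO(1)$ words, giving $\cO(n)$ total; the $n$ leaves give the matching $\Omega(n)$ lower bound.
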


\begin{figure}[t]
    \begin{subfigure}{0.2\textwidth}
    {\scriptsize $0:1~2~4~4~2~5~5~1\vphantom{)}$}\\
    {\scriptsize $1:2~4~4~2~5~5~1\vphantom{)}$}\\
    {\scriptsize $2:4~4~2~5~5~1\vphantom{)}$}\\
    {\scriptsize $3:4~2~5~5~1\vphantom{)}$}\\
    {\scriptsize $4:2~5~5~1\vphantom{)}$}\\
    {\scriptsize $5:5~5~1\vphantom{)}$}\\
    {\scriptsize $6:5~1\vphantom{)}$}\\ 
    {\scriptsize $7:1\vphantom{)}$}    
    \caption{Suffixes of $w$.}
    \end{subfigure}
    \begin{subfigure}{0.4\textwidth} \label{fig:Scode}
    {\scriptsize $(\perp,\perp)~(0,\perp)~(1,\perp)~(2,2)~(1,1)~(3,\perp)~(5,5)~(0,0)$~\texttt{\$}}\\
    {\scriptsize $(\perp,\perp)~(0,\perp)~(1,1)~(0,0)~(2,\perp)~(4,4)~(\perp,3)$~\texttt{\$}}\\
    {\scriptsize $(\perp,\perp)~(0,0)~(\perp,1)~(1,\perp)~(3,3)~(\perp,2)$~\texttt{\$}}\\
    {\scriptsize $(\perp,\perp)~(\perp,0)~(0,\perp)~(2,2)~(\perp,1)$~\texttt{\$}}\\
    {\scriptsize $(\perp,\perp)~(0,\perp)~(1,1)~(\perp,0)$~\texttt{\$}}\\
    {\scriptsize $(\perp,\perp)~(0,0)~(\perp,1)$~\texttt{\$}}\\
    {\scriptsize $(\perp,\perp)~(\perp,0)$~\texttt{\$}}\\
    {\scriptsize $(\perp,\perp)$~\texttt{\$}}
    \caption{\textsf{SufCode}$(w)$.} 
    \end{subfigure}
    \begin{subfigure}{0.38\textwidth}
    \centering
    \includegraphics[width=5cm]{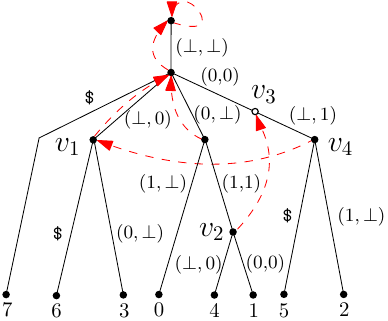}
    \caption{$\OPST(w)$ with suffix links in red. The information stored in explicit nodes is not shown. The node $v_3=\textsf{\small SufLink}(v_2)$ representing $(\perp,\perp)~(0,0)$ is an explicit non-branching suffix-link node.} 
    \label{fig:OIST_slinks}
    \end{subfigure}
    \caption{Suffixes of $w$, $\Scode(w)$, and $\OPST(w)$ for string $w=1~2~4~4~2~5~5~1$ over alphabet $\Sigma=[1,5]$ of size $\sigma=5$.}
    \label{fig:OIST}
\end{figure}

\begin{example}
\cref{fig:OIST} depicts the $\OPST(w)$ for $w=1~2~4~4~2~5~5~1$. Node $v_2$ is branching. It  stores its string depth, which is $3$, and represents $(\perp,\perp)~(0,\perp)~(1,1)$.
One witness of $v_2$ is position $1$ representing substring $x =w[1
\dd 3]=2~4~4$, and
another is position $4$ representing substring $y = w[4\dd 6]=2~5~5 \approx x$. $\Slink(v_2)$ points to the explicit non-branching node $v_3$ that represents $(\perp,\perp)~(0,0)$.
Indeed, $\Pcode(2~4~4)=\Pcode(2~5~5) \implies \Pcode(4~4)=\Pcode(5~5)=(\perp,\perp)~(0,0)$.
\end{example}

\section{The OPST Construction Algorithm}

We present an $\cO(n\sigma\log\sigma)$-time and $\cO(n)$-space algorithm to construct $\OPST(w)$ for string $w$. Before that we show: the preprocessing of $w$ into a compact data structure for computing the $\Lcode$ of any given fragment of $w$; and two crucial combinatorial lemmas: the first one bounds the outdegree of any branching node of $\OPST(w)$; and the second the number of explicit non-branching nodes in any root-to-leaf path.
In the end, we provide a faster alternative algorithm to construct $\OPST(w)$ in $\cO(n\log\sigma)$ time using $\cO(n)$ space.

\paragraph{Preprocessing.}~We preprocess $w$ to efficiently compute the \Lcode of $w[i\dd j]$, for any range $[i,j]$. In particular, we preprocess $w$ for \emph{range predecessor/successor search}~\cite{DBLP:conf/soda/BabenkoGKS15} to directly obtain \cref{lem:oracle}, which we term the \emph{letter oracle}.

\begin{lemma}[Letter oracle]\label{lem:oracle}
Any string $w$ of length $n$ over an alphabet $[0,\sigma)$, with $\sigma\leq n$, can be preprocessed in 
$\cO(n\log\sigma/\sqrt{\log n})$ time in a data structure of $\cO(n \log \sigma/ \log n)$ size so that, given two integers $i\leq j$ from $[0,n)$, we can return $\Lcode(w[i\dd j])$ in $\cO(\log \sigma)$ time.
\end{lemma}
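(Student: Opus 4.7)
The plan is to reduce the computation of $\Lcode(w[i \dd j])$ to two range queries on $w$ and invoke the range predecessor/successor data structure cited just before the lemma.

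First I would unfold the definition of $\Lcode$ restricted to a fragment. Write $\Lcode(w[i \dd j]) = (p, s)$, where $p$ equals $k - i$ for the largest $k \in [i, j-1]$ with $w[k] = \max\{w[\ell] : \ell \in [i, j-1],\, w[\ell] \leq w[j]\}$, and $s$ is the analogous quantity with $\min$ taken over $w[\ell] \geq w[j]$; if either set is empty, the corresponding coordinate is $\perp$. Equivalently, $p$ (respectively $s$) is the rightmost position in $[i, j-1]$ holding the predecessor (respectively successor) of the key $w[j]$ within that range, re-indexed so that absolute position $i$ becomes $0$. These are exactly range predecessor and range successor queries on the array $w$ with query value $w[j]$ and query range $[i, j-1]$. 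The degenerate case $i = j$ is handled directly by returning $(\perp, \perp)$.

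Next I would plug in the range predecessor/successor data structure of Babenko, Gawrychowski, Kociumaka and Starikovskaya. In the appropriate tradeoff of that result, a length-$n$ string over $[0, \sigma)$ is preprocessed in $\cO(n \log\sigma / \sqrt{\log n})$ time using $\cO(n \log\sigma / \log n)$ words of space so that such queries are answered in $\cO(\log\sigma)$ time. Two queries per call to $\Lcode(w[i \dd j])$, plus $\cO(1)$ arithmetic to convert absolute positions into indices relative to $i$ (and to substitute $\perp$ when the query returns ``no answer''), yield the claimed $\cO(\log\sigma)$ query time within the stated preprocessing and space budget.

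The main point requiring care is the tie-breaking rule: if several positions in $[i, j-1]$ share the predecessor value $v'$, we must return the rightmost one, rather than just the value $v'$ or some arbitrary witness. I would address this either by using the variant of the cited structure that natively returns the rightmost witness, or by a two-step reduction—first query for the predecessor value $v'$ of $w[j]$ in $[i, j-1]$, then query for the rightmost occurrence of $v'$ in $[i, j-1]$, which is itself a range predecessor query with key $v'$ and fits inside the same time, space, and preprocessing bounds. The successor side is handled symmetrically. I do not expect a real obstacle beyond matching the advertised tradeoff of the cited result to the complexities in the statement, since the queries we need are precisely the canonical range predecessor/successor queries on a word-RAM string.
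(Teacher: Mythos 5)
Your proposal is correct and follows the same route as the paper: the paper gives no standalone proof of this lemma; it simply states that preprocessing $w$ for range predecessor/successor search via Babenko, Gawrychowski, Kociumaka, and Starikovskaya (SODA 2015) yields the claimed bounds ``directly.'' Your reduction of $\Lcode(w[i\dd j])$ to a range predecessor and a range successor query with key $w[j]$ over $[i,j-1]$, with the trivial offset $k\mapsto k-i$ and the $i=j$ case handled separately, is exactly that reduction spelled out.

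One remark on the tie-breaking discussion, which the paper glosses over. Your concern is legitimate, but your second fallback is circular: querying range predecessor on $[i,j-1]$ with key $v'$ again returns \emph{some} position holding value $v'$, so it suffers from the same ambiguity you were trying to resolve. Your first option is the right one — wavelet-tree-based range structures such as the one cited can be made to return the rightmost witness at no extra asymptotic cost (e.g., by a rank at the right endpoint followed by a select). It is also worth noticing that for the downstream algorithms any \emph{consistent} deterministic tie-breaking rule would do: Lemma~\ref{lem:ois} only needs $\Lcode$ to be a well-defined function of the OP class of the fragment, which holds as long as the oracle is deterministic; ``rightmost'' is a convention inherited from the original $\OPST$ definition rather than a requirement of the construction.
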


\paragraph{Combinatorial Results.}~While, by Lemma~\ref{lem:OPSTsize}, $\OPST(w)$ contains $\Theta(n)$ explicit nodes and edges, the maximal outdegree of any branching node of the tree is not known. For our construction algorithm, this is crucial. We next show a lemma bounding the outdegree of any branching node of $\OPST(w)$ by $\cO(\sigma)$. This allows us to navigate through $\OPST(w)$ efficiently. 

\begin{lemma}\label{lem:outdegree}
Any branching node in $\OPST(w)$, for any string $w$, has maximal outdegree $2\sigma + 1=\cO(\sigma)$.    
\end{lemma}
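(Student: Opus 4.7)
The plan is to bound the outdegree of a branching node $v$ in $\OPST(w)$ by counting how many distinct first entries can appear on the labels of edges leaving $v$. By the $\OPST$ construction recalled above, each outgoing edge is labeled by a sequence whose first entry is either \dollar or the $\Lcode$ of some one-letter extension $xc$ of a substring $x$ represented by $v$. Since distinct outgoing edges carry distinct first entries, it suffices to bound the number of such entries.

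First I would invoke \cref{lem:ois} and \cref{lem:locus} to observe that every substring represented by $v$ is OP-equivalent to every other, and hence they all share the same sequence of ranks. Let $\ell\le\sigma$ denote the number of distinct letter values in any such substring $x$, sorted as $a_1<a_2<\cdots<a_\ell$. The rightmost position in $x$ at which $a_i$ occurs is fixed by the OP class of $x$, so $\Lcode(xc)=(p(xc),s(xc))$ depends only on where $c$ falls in value relative to $a_1,\ldots,a_\ell$.

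Next I would enumerate by cases on $c$: (i) $c<a_1$ contributes a single $\Lcode$ of the form $(\perp,s)$; (ii) $c=a_i$ for some $i$ contributes $\ell$ distinct $\Lcode$s, each with $p=s$ equal to the rightmost occurrence of $a_i$; (iii) $a_i<c<a_{i+1}$ for some $i<\ell$ contributes $\ell-1$ distinct $\Lcode$s with $p\ne s$ and both finite; (iv) $c>a_\ell$ contributes a single $\Lcode$ of the form $(p,\perp)$. These four cases produce pairwise distinct $\Lcode$s (separated by the presence of $\perp$ and by whether $p=s$), yielding at most $2\ell+1$ letter-edges in total. Adding the possible \dollar-edge gives $2\ell+2$ in general; however, when $\ell=\sigma$ the entire alphabet is already present in $x$, so cases (i) and (iv) vanish, and the total drops to $2\sigma$. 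In every case the outdegree is at most $2\sigma+1$, as claimed.

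The only real care lies in verifying that the four cases produce pairwise distinct $\Lcode$s and in handling the corner case $\ell=\sigma$ so that the alphabet-exhaustion kicks in exactly when it is needed; beyond that the bound is a matter of elementary counting, and I foresee no substantial combinatorial obstacle.
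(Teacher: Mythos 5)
Your proof follows essentially the same approach as the paper's: fix a representative substring $x$ represented by the branching node and enumerate the possible $\Lcode$'s of one-letter extensions $xc$ by cases on where $c$ falls relative to the distinct letter values in $x$. Your bookkeeping is actually slightly more careful than the paper's: you parametrize on the number $\ell\le\sigma$ of distinct letters actually present, correctly observe that the ``outside the existing letters'' cases shrink as $\ell$ grows (and vanish entirely when $\ell=\sigma$), and explicitly account for the possible $\dollar$-edge, so you in fact obtain the marginally tighter bound $2\sigma$, whereas the paper just adds the worst-case counts $\sigma$ and $\sigma+1$ of the two cases (which cannot both be attained simultaneously) without tracking the $\dollar$-edge. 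Both routes rest on the same idea, and either yields the claimed $\cO(\sigma)$ bound.
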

\begin{proof}
    We first show that, 
    for any string $w$ over a totally ordered alphabet of size $\sigma$, there exist only up to $2\sigma + 1$ possibilities for $\Lcode(w\cdot a)$ over all letters $a$.
    
    Let us start with the trivial case, where $w$ is the empty string and so $\Lcode(w\cdot a)=\Lcode(a)=(\perp,\perp)$, for any $a$.
    In this case, the above upper bound clearly holds. 
    
    For the rest of the proof, we assume that $w$ is nonempty. We have the following two cases:
    \begin{enumerate}
        \item If letter $a$ belongs to the alphabet of $w$, then $\Lcode(w\cdot a)$ is equal to $(x_a,x_a)$, where $x_a$ is the \emph{single position} of the rightmost occurrence of $a$ in $w$. These are $\sigma$ possibilities: one for every $a$.
        \item Otherwise let the letters of $w$ in the sorted order be equal to $a_j$ for $j\in[0,\sigma)$, and let $x_j$ be the position of the rightmost occurrence of $a_j$ in $w$.
    There exists a single value $j\in[-1,\sigma)$ such that $a_j<a<a_{j+1}$, where $a_{-1}=-\infty$ and $a_{\sigma}=\infty$.
    Then $\Lcode(w\cdot a)$=$(x_j,x_{j+1})$, where $x_{-1}=x_{\sigma}=\perp$.
    Note that, in particular, only elements $x_j$ can occur in $\Lcode$. 
    These are $\sigma+1$ possibilities: $(\perp,x_0),(x_0,x_1),(x_1,x_2),\ldots,(x_{\sigma-1},\perp)$.
    \end{enumerate}   
    
    The upper bound of $2\sigma + 1$ follows by combining cases 1 and 2 above.
    By the above upper bound, any branching node of $\OPST$ cannot have more than
    $2\sigma + 1$ outgoing edges.
\end{proof}

In the standard suffix tree, every internal explicit node is branching. In $\OPST(w)$, however, we may have an internal explicit node (other than the root) which is non-branching (e.g., $v_3$ in \cref{fig:OIST_slinks}). \cref{lem:num_of_expl} bounds the number of explicit non-branching nodes on any root-to-leaf path; this is crucial to prove the time complexity of our construction algorithm.

\begin{lemma}\label{lem:num_of_expl}
There can be at most $\sigma$ explicit non-branching nodes on any  root-to-leaf path in $\OPST(w)$, and in particular between any two branching nodes. 
\end{lemma}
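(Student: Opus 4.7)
The plan is to establish a crucial combinatorial claim: at each explicit non-branching node on a root-to-leaf path, the letter immediately following its represented substring in the corresponding suffix must be a \emph{new} letter (not seen in that substring). The lemma then follows by simple counting. Concretely, let $P$ be the root-to-leaf path for suffix $w[\pi\dd n-1]$, and let $u_1,\ldots,u_k$ be the explicit non-branching nodes on $P$ in increasing order of string depth $d_1 < \cdots < d_k$. I aim to show that $w[\pi+d_j] \notin \{w[\pi], w[\pi+1], \ldots, w[\pi+d_j-1]\}$ for every $j$. Once this is proved, the letters $w[\pi], w[\pi+d_1], \ldots, w[\pi+d_k]$ are pairwise distinct (each $w[\pi+d_j]$ is excluded from a window containing all previously listed positions), so $k+1 \le \sigma$ and thus $k \le \sigma$.

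For the claim itself, I would argue by contradiction. Suppose $w[\pi+d_j] = w[\pi+r]$ for some $r \in [0,d_j-1]$ (chosen as the rightmost such local index), so that $\Lcode(w[\pi\dd\pi+d_j]) = (r,r)$, a ``matching'' code. By the non-branching of $u_j$ combined with \cref{lem:ois}, every OP-equivalent occurrence $w[y\dd y+d_j-1]$ of $u_j$'s substring must extend uniformly with the same $\Lcode = (r,r)$, which forces $w[y+d_j] = w[y+r]$ at the rightmost occurrence of that value in $w[y\dd y+d_j-1]$.

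Now pick any branching node $v_j$ with $\Slink(v_j)=u_j$; by definition of suffix links and \cref{lem:ois}, its witnesses are positions $y_0$ such that $w[y_0+1\dd y_0+d_j] \approx w[\pi\dd\pi+d_j-1]$. For each such witness, the extending letter $w[y_0+d_j+1]$ coincides with $w[y_0+r+1]$, a letter at local position $r+1 \ge 1$ inside $w[y_0\dd y_0+d_j]$, and this remains the rightmost occurrence of that value since the only potential competitor---$w[y_0]$ at local position $0$---is to its left. Hence $\Lcode(w[y_0\dd y_0+d_j+1]) = (r+1,r+1)$ uniformly for every witness of $v_j$, so $v_j$ admits a unique extension $\Lcode$, contradicting the branching of $v_j$.

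The main obstacle is the careful rightmost-occurrence bookkeeping in the computation of $\Lcode(w[y_0\dd y_0+d_j+1])$, especially verifying that the value $(r+1,r+1)$ is obtained regardless of whether $w[y_0]$ accidentally equals the extending letter. One must also dispatch the boundary case in which some witness of $v_j$ abuts the end of $w$ (forcing a \texttt{\$}-extension): I would argue this cannot occur under our assumption, since such a witness would correspond to an occurrence of $u_j$ also ending at $n-1$, which combined with the non-branching of $u_j$ would leave $u_j$ with only a single occurrence and hence $v_j$ non-branching too. Once the claim is proved, the global bound $k \le \sigma$ follows immediately by counting distinct letters, and the ``in particular between any two branching nodes'' statement follows because any such inter-branching segment is contained in some root-to-leaf path.
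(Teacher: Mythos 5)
Your proposal is correct and takes essentially the same approach as the paper: both prove the key combinatorial claim that the letter immediately extending the substring represented at an explicit non-branching node $u$ must be a new letter (a first occurrence along that suffix), and both derive the contradiction from the existence of a branching node $v$ with $\Slink(v)=u$. Your version argues computationally by showing that all witnesses of $v$ would extend with the same $\Lcode$ $(r+1,r+1)$, whereas the paper argues via the pair of positions that must distinguish two non-OP-equivalent witness extensions of $v$'s children, but the decomposition and core insight coincide.
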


\begin{proof}
We first prove that an explicit non-branching node in $\OPST(w)$ can appear on a root-to-leaf path with witness suffix $w'$ only right before the first occurrence of some letter.

Let $xb$ be a witness substring of a node directly below an explicit non-branching node $u$, where $x$ is a string and $b$ is a letter.
For the explicit non-branching node to be constructed there must exist a branching node $v$ such that $\Slink(v)=u$. Let the witnesses of any two children of $v$ be equal to $a'x'b'$ and $a''x''b''$, respectively, where $x',x''$ are strings and $a',a'',b',b''$ are letters.
We have:
\begin{itemize}
\item $a'x'\approx a''x''$, since both those witnesses are represented by the same node $v$.
\item $a'x'b' \not\approx a''x''b''$, since the two children of $v$ are distinct nodes -- recall that node $v$ is branching.
\item $x'b'\approx xb\approx x''b''$, since $\Locus$ of both $x'b'$ and $x''b''$ (witnesses of children of $v$ with their first letter deleted) must be equal to the single child of $u$ -- recall that node $u$ is explicit but non-branching so it has only one child.
\end{itemize}

Notice that for two equal-length strings $w_1\not\approx w_2$, there must exist two positions $i,j$ such that the relationship ($<$, $=$, $>$) between $w_1[i]$ and $w_1[j]$ is different than the relationship between $w_2[i]$ and $w_2[j]$. Hence the relationship between $a'$ and $b'$ has to be different than the relationship between $a''$ and $b''$ as the relationship of all the other pairs of the letters of $a'x'b'$ and $a''x''b''$ is fixed by the two listed equivalences.
This proves that $b'$ cannot appear in $x'$ as if that was the case the relationship between $a'$ and $b'$ (as a letter of $x'$) would be fixed and the same as the relationship between $a''$ and $b''$ -- a contradiction.
Thus $b$ does not appear in $x$ since $xb\approx x'b'$.

Now, since the explicit non-branching node on a root-to-leaf path can only appear just above the first occurrence of a letter, there are only at most $\sigma$ such nodes. Clearly, this also implies that there can only be at most $\sigma$ explicit non-branching nodes 
between any two branching nodes. 
\end{proof}

\noindent{\bf The Algorithm.}~\cref{alg:con} presents our construction algorithm for $\OPST(w)$. The main idea is to add the sequences of $\Scode(w)$ to an initially empty  compacted trie one at a time, starting from the longest one. 
If we remove the lines colored red in~\cref{alg:con}, we obtain the bruteforce algorithm which inserts every sequence of $\Scode(w)$ in the trie from the root. We next explain the bruteforce algorithm. 

In Lines \ref{alg:con:line1}-\ref{alg:con:line2}, the bruteforce algorithm initializes the trie to have only the root. The active node $u$ and the active string depth $d$ are set to the root and to $0$, respectively (Line~\ref{alg:con:line3}).
The \emph{for} loop in Line~\ref{alg:con:line5} inserts all sequences of $\Scode(w)$ to the trie.
The \emph{while} loop in Line~\ref{alg:con:while1} traverses explicit nodes
using the $\textsf{\small Child}$ function, which returns the appropriate child of $u$, and updates $(u,d)$ upon success (Line~\ref{alg:con:while_suc}).    
The \emph{while} loop in  Line~\ref{alg:con:while2} traverses implicit nodes
by comparing the $\textsf{\small LastCode}$'s and increments $d$ upon success (Line~\ref{alg:con:while2_suc}). We create a branching node $u$ at depth $d$ by \textsf{\small CreateNode}, which also stores as $\Witness(u)$ the smallest $i$ such that the locus of $w[i\dd i+d-1]$ 
is $(u,d)$ (Lines~\ref{alg:con:line10}-\ref{alg:con:line11}).  
We can do this as we insert the sequences of $\Scode(w)$ from the longest to the shortest. Finally, we create the leaf node labeled $i$ (Line~\ref{alg:con:line12}).

We now explain how our algorithm becomes efficient using \emph{suffix links} of branching nodes: these are created during the construction algorithm (see~\cref{alg:slink-simple}) and used as shortcuts for subsequent additions of sequences of $\Scode(w)$. A suffix link is a \emph{shortcut}, as $\Slink(\Locus(x))$ takes us directly to $\Locus(x[1\dd |x|-1])$, for any substring $x$ of $w$, instead of passing always via the root node. To compute the suffix link of a branching node $u$ efficiently, we employ \cref{lem:num_of_expl}.

\begin{algorithm}
\caption{Construction Algorithm for $\OPST(w)$}\label{alg:con}
\begin{algorithmic}[1]\footnotesize
    \Require The letter oracle for $w[0\dd n-1]$ given by \cref{lem:oracle}. 
	\State $root\gets \textsf{\footnotesize EmptyNode}()$\label{alg:con:line1}
	\State $\textsf{\footnotesize Depth}(root)\gets 0$\label{alg:con:line2}
	\State $(u,d)\gets (root,0)$\label{alg:con:line3} 
	\State \textcolor{red}{$\textsf{\footnotesize SufLink}(root) \gets root$}\label{alg:con:line4}
	\For{$i\in[0,n-1]$}\label{alg:con:line5}
	  \While{$d = \textsf{\footnotesize Depth}(u)$ \textbf{ and }
   $\textsf{\footnotesize Child}(u, \textsf{\footnotesize LastCodeInt}(w[i\dd i +d]))$ \text{is} \textit{not empty}}\label{alg:con:while1} 
	  \State $(u,d) \gets (\textsf{\footnotesize Child}(u, \textsf{\footnotesize LastCodeInt}(w[i\dd i + d])),d+1)$\label{alg:con:while_suc}
\While{$d < \textsf{\footnotesize Depth}(u)$ \textbf{ and } $\textsf{\footnotesize LastCode}(w[\textsf{\footnotesize Witness}(u)\dd \textsf{\footnotesize Witness}(u) + d]) = \textsf{\footnotesize LastCode}(w[i\dd i + d])$)}\label{alg:con:while2} 
	    \State $d \gets d+1$\label{alg:con:while2_suc}
	    \EndWhile
	  \EndWhile
	  \If {$d < \textsf{\footnotesize Depth}(u)$}\label{alg:con:line10}
	   \State $u \gets \textsf{\footnotesize CreateNode}(u,d)$\label{alg:con:line11}
	  \EndIf
	  \State $\textsf{CreateLeaf}(i,u,d)$\label{alg:con:line12}
	  \If{\textcolor{red}{$\textsf{\footnotesize SufLink}(u) \textit{ is empty }$}} 
	   \State \textcolor{red}{$\textsf{\footnotesize SufLink}(u)\gets \textsf{ComputeSuffixLinkSimple}(u)$}
	  \EndIf
    \State \textcolor{red}{$(u,d)\gets (\textsf{\footnotesize SufLink}(u),\max\{d-1,0\})$}
	\EndFor
\end{algorithmic}
\end{algorithm}

\cref{alg:con} is a non-trivial  
adaptation of the algorithm by McCreight for standard suffix tree construction~\cite{DBLP:journals/jacm/McCreight76}.
The key differences between \cref{alg:con} and~\cite{DBLP:journals/jacm/McCreight76}  
are explained next.

\emph{How we traverse explicit nodes}: Let $i$ be the witness occurrence stored by some explicit node $u$. Recall that edge $(u,v)$ stores $\Lcode(w[i\dd i+d])$. Instead of this, we map $\Lcode(w[i\dd i+d])=(a,b)\mapsto \LcodeInt(w[i\dd i+d])=(a+1)(n+1)+b+1$, where $\perp$ is treated as $-1$, for every edge in the tree, and store 
$\LcodeInt(w[i\dd i+d])$, 
so that the resulting values are integers in the range $[0,(n+1)^2]$.
It should then be clear that the following property holds for this mapping:
$(a_1,b_1)\leq(a_2,b_2) \iff (a_1+1)(n+1)+b_1+1 \leq (a_2+1)(n+1)+b_2+1$.
By \cref{lem:outdegree}, accessing an edge from a node $u$ by the edge label (\LcodeInt) can be achieved in $\cO(\log \sigma)$ time using AVL trees. This is because the mapping preserves the total order, and we have $\cO(\sigma)$ outgoing edges per node $u$. This transition from $u$ to $v$ is implemented by the \textsf{\small Child} function. It can also be performed in $\cO(1)$ time \emph{with high probability} (whp) via employing perfect hashing~\cite{DBLP:journals/jacm/BenderCFKT23}. 

\emph{How we traverse implicit nodes}: In the standard suffix tree, we have random access to string $w$, and so comparing any two letters takes $\cO(1)$ time merely by accessing $w$. Here we use the letter oracle given by \cref{lem:oracle} to compute the $\Lcode$ of different fragments of $w$. This gives an $\cO(\log\sigma)$-time comparison between any two such $\textsf{\small LastCode}$'s.

\emph{How we compute suffix links (\cref{alg:slink-simple})}: Unlike the algorithm by McCreight, we may construct an explicit node during the suffix-link construction which never becomes branching by the end of the construction; e.g., see $v_3$ in \cref{fig:OIST_slinks}. \cref{alg:slink-simple} presents a simple approach for computing the suffix link of a node $u$. It mimics McCreight's algorithm in that it first finds the nearest ancestor $u'$ of $u$ that has a suffix link (Lines \ref{alg:compsuf:line3} and \ref{alg:compsuf:line4}); since $u'$ has its suffix link constructed, we follow it to compute the suffix link of $u$. Unlike McCreight's algorithm though, the parent of $u$ may not have a suffix link, and so locating $u'$ may take more than $\cO(1)$ time. After we locate $u'$, we follow its suffix link which takes us to another node $v$ (Line~\ref{alg:compsuf:line5}), and then traverse down using the letter oracle until we arrive at string depth $d-1$ (Lines~\ref{alg:compsuf:line6} and~\ref{alg:compsuf:line7}). If the node at this depth is implicit (Line~\ref{alg:compsuf:line8}),  we make it explicit (Line~\ref{alg:compsuf:line9}). Finally, we return the latter node (Line~\ref{alg:compsuf:line10}).

\cref{alg:slink-simple} is efficient when the total number of explicit nodes we traverse is $\cO(n)$.  \cref{lem:num_of_expl} bounds the number of explicit non-branching nodes we traverse in Lines \ref{alg:compsuf:line3} and \ref{alg:compsuf:line4} and Lines \ref{alg:compsuf:line6} and \ref{alg:compsuf:line7}, for the computation of the suffix link of one branching node. 
If the total number of explicit non-branching nodes traversed throughout \OPST construction is $k(w)$, the time complexity is $\cO((n+k(w))\log\sigma)$. 
In McCreight's algorithm, $k(w)=\cO(n)$, and this is why it runs in $\cO(n\log\sigma)$ time.
\cref{lem:nsigma} bounds $k(w)$ in the OP setting. 

\begin{algorithm}
\caption{$\textsf{\small ComputeSuffixLinkSimple}(u)$}\label{alg:slink-simple}
\begin{algorithmic}[1]\footnotesize
    \Require The letter oracle for $w[0\dd n-1]$ given by \cref{lem:oracle}.
	\State $d \gets \textsf{\footnotesize Depth}(u)$\label{alg:compsuf:line1}
    \State $u' \gets u$\label{alg:compsuf:line2}
    \While{$\textsf{\footnotesize SufLink}(\textsf{\footnotesize Parent}(u'))$ \textit{empty}}\label{alg:compsuf:line3}
    \State $u' \gets \textsf{\footnotesize Parent}(u')$\label{alg:compsuf:line4}
    \EndWhile
	\State $v \gets \textsf{\footnotesize SufLink}(\textsf{\footnotesize Parent}(u'))$\label{alg:compsuf:line5}
	  \While{$\textsf{\footnotesize Depth}(v) < d-1$}\label{alg:compsuf:line6}  
	  \State $v \gets \textsf{\footnotesize Child}(v, \textsf{\footnotesize LastcodeInt}(w[\textsf{\footnotesize Witness}(u)+ 
    1\dd \textsf{\footnotesize Witness}(u) + 1 + \textsf{\footnotesize Depth}(v)]))$\label{alg:compsuf:line7}
	  \EndWhile  
	  \If{$\textsf{\footnotesize Depth}(v) > d-1$}\label{alg:compsuf:line8}                         
        \State $v \gets \textsf{\footnotesize CreateNode}(v,d-1)$\label{alg:compsuf:line9}  
    \EndIf
    \State {\bf return} $v$\label{alg:compsuf:line10}
\end{algorithmic}
\end{algorithm}

\begin{lemma}\label{lem:nsigma}
In $\OPST(w)$, for any string $w$ of length $n$ over an alphabet of size $\sigma$, we have $k(w)= \cO(n\sigma)$.
\end{lemma}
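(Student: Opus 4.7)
The plan is to bound $k(w)$ phase by phase. I will decompose each explicit non-branching node visit by the phase of the algorithm in which it occurs: (a) the scan in \cref{alg:con} (outer while at line~\ref{alg:con:while1} and inner while at line~\ref{alg:con:while2}); (b) the up-phase of \cref{alg:slink-simple} (lines~\ref{alg:compsuf:line3}--\ref{alg:compsuf:line4}); and (c) the down-phase of \cref{alg:slink-simple} (lines~\ref{alg:compsuf:line6}--\ref{alg:compsuf:line7}). Throughout, I exploit the fact that suffix links are stored only at branching nodes (and the root), so in (b) we walk up from the newly created branching node $u$ precisely through its non-branching ancestors, stopping at its nearest branching ancestor $\hat{u}$.

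For (a), I apply a standard McCreight-style amortized analysis using the active depth $d$ as potential: $d \in [0,n]$ throughout, $d$ decreases by at most $1$ per outer for-loop iteration (via the suffix-link transition), and each outer while iteration at line~\ref{alg:con:while1} visits one explicit node and increases $d$ by at least $1$. Hence the total number of explicit-node visits in (a) is $\cO(n)$, which upper bounds its contribution to $k(w)$. For (b), I apply \cref{lem:num_of_expl} directly. The up-phase climbs from $u$ to $\hat{u}$, and \cref{lem:num_of_expl} bounds the number of non-branching nodes between $u$ and $\hat{u}$ by $\sigma$, so each invocation of \cref{alg:slink-simple} contributes at most $\sigma$ non-branching node visits in its up-phase. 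Since \cref{alg:slink-simple} is invoked at most once per branching node, and the number of branching nodes is $\cO(n)$, phase (b) contributes $\cO(n\sigma)$ to $k(w)$.

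For (c), which is the main obstacle, I combine a McCreight-style amortized analysis with \cref{lem:num_of_expl} by viewing the algorithm through the \emph{branching skeleton} of $\OPST(w)$, i.e., the subtree induced by the branching nodes and the root. Applied at the skeleton level --- where each ``skeleton edge'' contracts the maximal path consisting of a single explicit branching node and its non-branching descendants up to the next branching node --- McCreight's potential-function argument bounds by $\cO(n)$ the total number of skeleton edges traversed across all down-phases. \cref{lem:num_of_expl} then supplies a multiplicative factor of at most $\sigma+1$ explicit nodes per skeleton edge, so phase (c) contributes $\cO(n\sigma)$ to $k(w)$. Summing the three contributions yields $k(w) = \cO(n) + \cO(n\sigma) + \cO(n\sigma) = \cO(n\sigma)$, as claimed. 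The careful part is verifying that McCreight's potential argument transfers verbatim to the branching skeleton of $\OPST(w)$ --- that is, that the ``head depth'' monotonicity $h_i \geq h_{i-1}-1$ survives the OP setting, so that $\sum_i (h_i - h_{i-1} + 1) = \cO(n)$ still bounds the total skeleton work.
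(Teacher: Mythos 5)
Your decomposition into phases (a), (b), (c) is fine, and your analyses of (a) and (b) are essentially correct: (a) follows from a straightforward potential argument on the active string depth $d$, and (b) follows by applying \cref{lem:num_of_expl} once per invocation of \cref{alg:slink-simple}. The gap is precisely where you flagged it, in phase (c). The claim that ``McCreight's potential-function argument bounds by $\cO(n)$ the total number of skeleton edges traversed across all down-phases'' is false: the head-depth monotonicity $h_i \geq h_{i-1}-1$ does \emph{not} survive at the skeleton level. When you follow the suffix link from $\hat{u}$ to $v=\Slink(\hat{u})$, the classic injectivity argument maps the branching ancestors of $\hat{u}$ to distinct \emph{explicit} ancestors of $v$, but those images need not be branching. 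By \cref{lem:num_of_expl}, up to $\sigma$ of them can be non-branching on the root-to-$v$ path, so the branching-skeleton depth of $v$ can be as low as (skeleton depth of $\hat{u}$) $- 1 - \sigma$. Hence the skeleton-level potential can drop by $\Theta(\sigma)$ per iteration, giving $\cO(n\sigma)$ skeleton moves, and multiplying by the $\sigma+1$ per-edge overhead yields $\cO(n\sigma^2)$, not the claimed $\cO(n\sigma)$.

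The paper instead runs a single amortized argument directly at the level of full node depth, with no phase decomposition and no skeleton. Each iteration the node-depth potential decreases by at most $\cO(\sigma)$: at most $\sigma+1$ from the physical up-moves (\cref{lem:num_of_expl}), plus $\cO(\sigma)$ more from the suffix-link jump itself, because the non-branching ancestors of $\hat{u}$ may have no explicit image on the path to $v$. Summing over $n$ iterations, the total decrease is $\cO(n\sigma)$; since the potential is bounded by $n$, the total increase --- i.e., the total number of down-moves over both \cref{alg:con} and \cref{alg:slink-simple} --- is $\cO(n\sigma)+n=\cO(n\sigma)$, and $k(w)=\cO(n\sigma)$ follows. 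To repair your proof, replace the skeleton argument in phase (c) by this node-depth argument: you already have the total up-moves bounded by $\cO(n\sigma)$ from phase (b), and the down-move count is bounded by up-moves plus the $\cO(\sigma)$ per-iteration drop from the suffix-link jump plus the maximum node depth $n$.
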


\begin{proof}
We mimic the proof of the complexity of McCreight's algorithm~\cite{DBLP:journals/jacm/McCreight76}. Therein, the bound on the number $k(w)$ of moves was proved based on the \emph{node depth} of the nodes $u$ and $v$: when going from $u$ to its branching parent $u'$, the node depth decreases by $1$, and then upon using the suffix link of $u'$ it can again decrease by at most $1$. Thus we can do at most $2n$ moves upward, and since the node depth is bounded by $n$ (the maximal node depth in the tree), at most $3n$ moves downward.

In the OP setting, by \cref{lem:num_of_expl} and the above analysis of  McCreight's algorithm, the node depth can decrease by at most $\sigma +2$ when going upwards (which also bounds the possible number of moves) due to the explicit non-branching nodes that do not correspond to any explicit or branching node on the path from root to $v=\Slink(u)$. This means that we can make at most $n(\sigma+2)$ moves upward and at most $n(\sigma+2)+n$ moves downward; that is,  $k(w)\le n(2\sigma+5)=\cO(n\sigma)$.
\end{proof}

Since the time spent at each node in \cref{alg:slink-simple} is $\cO(\log\sigma)$, \cref{lem:nsigma} implies that all invocations of \cref{alg:slink-simple} take $\cO(n\sigma\log\sigma)$ time in total. Since the rest of \cref{alg:con}, like McCreight's algorithm~\cite{DBLP:journals/jacm/McCreight76}, takes $\cO(n\log\sigma)$ time, \cref{alg:con} takes $\cO(n\sigma \log\sigma)$ time and $\cO(n)$ space in total.

\paragraph{Faster Algorithm.}~We next provide a faster algorithm to construct $\OPST(w)$ in $\cO(n\log\sigma)$ time using $\cO(n)$ space. It is mainly of theoretical interest as it relies on a general framework for constructing suffix trees with \emph{missing suffix links}~\cite{DBLP:journals/siamcomp/ColeH03a}. This framework implies an algorithm for constructing $\OPST(w)$ in $\cO(n \cdot \textsf{oracle}(n,\sigma) + \textsf{ORACLE}(n,\sigma))$ time whp, where $\textsf{oracle}$ is the time complexity to compute $\Lcode(w[i\dd j])$, for any $i,j$, and $\textsf{ORACLE}$ is the construction time for this oracle. Using \cref{lem:oracle} gives $\textsf{oracle}(n,\sigma)=\cO(\log\sigma)$ and $\textsf{ORACLE}(n,\sigma)=\cO(n\log\sigma)$, so we obtain an $\cO(n\log\sigma)$-time randomized construction. The source of randomization in~\cite{DBLP:journals/siamcomp/ColeH03a} is an algorithm for navigating through the edges of the tree. With our \cref{lem:outdegree}, this operation takes $\cO(\log\sigma)$ \emph{deterministic} time. Hence we obtain an $\cO(n\log\sigma)$-time $\cO(n)$-space deterministic construction for $\OPST(w)$.

\begin{theorem}
For any string $w$ of length $n$ over a totally ordered alphabet of size $\sigma$, $\OPST(w)$ can be constructed in $\cO(n\log\sigma)$ time using $\cO(n)$ space.
\end{theorem}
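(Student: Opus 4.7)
The plan is to instantiate the generic framework of Cole and Hariharan~\cite{DBLP:journals/siamcomp/ColeH03a} for building suffix trees in the presence of \emph{missing} suffix links, using our letter oracle (\cref{lem:oracle}) as the black-box primitive and our outdegree bound (\cref{lem:outdegree}) to derandomize edge navigation. The framework is agnostic to the particular matching relation: it only requires (i) an oracle that, given two indices $i,j$, returns the ``last letter'' of the representation of $w[i\dd j]$ (here, $\Lcode(w[i\dd j])$) and thereby allows comparing two fragments' next-letter extensions; and (ii) a way to descend along an edge in time proportional to a look-up on its children. It produces the suffix tree in $\cO(n\cdot\textsf{oracle}(n,\sigma)+\textsf{ORACLE}(n,\sigma))$ time (whp), independently of how the tree decomposes into branching vs.\ non-branching explicit nodes.

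First, I would invoke \cref{lem:oracle} to build the letter oracle in $\textsf{ORACLE}(n,\sigma)=\cO(n\log\sigma/\sqrt{\log n})=\cO(n\log\sigma)$ time and $\cO(n\log\sigma/\log n)=\cO(n)$ space, so that every query $\Lcode(w[i\dd j])$ costs $\textsf{oracle}(n,\sigma)=\cO(\log\sigma)$ time. Plugging these bounds into the Cole--Hariharan complexity gives $\cO(n\log\sigma)$ total time for the tree construction itself. Second, I would replace the randomized dictionary used in~\cite{DBLP:journals/siamcomp/ColeH03a} for storing the outgoing edges of each explicit node by a deterministic balanced binary search tree (AVL trees) keyed by the integer encoding $\LcodeInt$ used in \cref{alg:con}; by \cref{lem:outdegree}, every branching node has at most $2\sigma+1$ children, so each lookup or insertion costs $\cO(\log\sigma)$ deterministic time, matching the bound used inside the framework and eliminating the randomness.

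The space bound follows from \cref{lem:OPSTsize} (the tree has $\Theta(n)$ explicit nodes and edges), the $\cO(n)$-size letter oracle, and the fact that the auxiliary structures maintained by Cole--Hariharan (credits for missing suffix links, tournament-style pointers, etc.) are linear in the number of explicit nodes. Summing all contributions yields $\cO(n)$ working space.

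The main obstacle I expect is the verification step: checking that the Cole--Hariharan framework really does port to $\OPST$, where explicit non-branching nodes can appear (e.g., $v_3$ in \cref{fig:OIST_slinks}) and where the ``alphabet'' of edge labels is not a fixed set but the data-dependent set of $\Lcode$ values along the tree. Concretely, I would verify that (a) the invariants the framework maintains on the ``credit-carrying'' nodes still hold when such non-branching nodes are created on the fly, using \cref{lem:num_of_expl} to bound their number on any root-to-leaf path by $\sigma$, and (b) the letter-oracle queries that the framework demands are exactly of the form supported by \cref{lem:oracle} (i.e., $\Lcode(w[i\dd j])$ for arbitrary $i\leq j$), so that no additional randomized hashing is needed. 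Once these two points are settled, the stated $\cO(n\log\sigma)$-time, $\cO(n)$-space deterministic construction follows.
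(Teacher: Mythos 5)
Your proposal is correct and matches the paper's own argument: both invoke the Cole--Hariharan framework for suffix trees with missing suffix links, instantiate the oracle via \cref{lem:oracle} to get $\cO(n\log\sigma)$ total work, and derandomize edge navigation by appealing to the $\cO(\sigma)$ outdegree bound of \cref{lem:outdegree}. One small note: the paper does not need \cref{lem:num_of_expl} for this route (that lemma is specific to bounding the McCreight-style walk in the $\cO(n\sigma\log\sigma)$ algorithm), since the Cole--Hariharan framework's accounting already absorbs explicit non-branching nodes; the $\Theta(n)$ total-size bound of \cref{lem:OPSTsize} suffices for the space claim.
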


\section{Frequent OP Pattern Mining Algorithms}
We show two algorithms for mining OP patterns using $\OPST(w)$. 
We use two standard, compact representations of patterns: \emph{maximal} (\cref{sec:maximal}) and \emph{closed} (\cref{sec:closed}).

\subsection{Mining Maximal Frequent OP Patterns}\label{sec:maximal}

Recall that $w$ is a string of length $n$.
Let $\tau>1$ be an integer.
An OP pattern $p$ is: 
\begin{itemize}
    \item \emph{$\tau$-frequent} in $w$, if there are at least $\tau$ \emph{distinct fragments} 
    $w[i\dd j]$ in $w$ whose OP pattern is $p$; i.e., $p = \Pcode(w[i\dd j])$ for every such fragment. 
    \item \emph{right-$\tau$-maximal} in $w$, if for every fragment $w[i\dd j]$ in $w$ whose OP pattern is $p$, $\Pcode(w[i\dd j+1])$ is \emph{not} $\tau$-frequent in $w$; \emph{left}-$\tau$-\emph{maximality} is defined analogously.
    \item \emph{$\tau$-maximal}, if it is both left- and right-$\tau$-maximal.
\end{itemize}
We may characterize nodes of \OPST based on the properties of the patterns they represent as $\tau$-frequent, $\tau$-maximal, etc.

We next present our $\cO(n)$-time and $\cO(n)$-space algorithm for mining \emph{all $\tau$-maximal $\tau$-frequent OP patterns in $w$}, given an OP suffix tree $\OPST(w)$. The algorithm has two phases. 

\paragraph{Phase I.}~It marks a branching node $v$ of $\OPST(w)$ as a \emph{candidate} node if:
\begin{itemize}
    \item $v$ has at least $\tau$ leaf descendants; and
    \item no descendant of $v$ has at least $\tau$ leaf descendants.
\end{itemize}
We perform this for all branching nodes via a depth-first search (DFS) traversal on $\OPST(w)$. 
The candidate nodes represent all OP patterns that are right-$\tau$-maximal \emph{and} $\tau$-frequent. 

\paragraph{Phase II.}~It identifies  each candidate node $v$ that is \emph{also} left-$\tau$-maximal to obtain
all $\tau$-maximal $\tau$-frequent OP patterns. This is because an OP pattern is $\tau$-maximal and $\tau$-frequent if and only if it labels a candidate node $v$ (so it is right-$\tau$-maximal and $\tau$-frequent) and additionally it is left-$\tau$-maximal. To identify such a $v$, we check whether both of the following conditions hold: (I) $v$ does \emph{not} have an incoming suffix link from another $\tau$-frequent node. Otherwise, the latter node would correspond to an extension of the pattern of $v$ to a $\tau$-frequent pattern. (II) All children of $v$ are left-$\tau$-maximal. Otherwise, a child of $v$ would correspond to an extension of the pattern of $v$ to a $\tau$-frequent pattern. Note that left-$\tau$-maximality is a property that propagates upward: if a branching node $v$ is \emph{not} left-$\tau$-maximal, then \emph{neither} is any of its ancestors. We thus check conditions I and II as follows. 

First, we mark all leaf nodes as left-$\tau$-maximal: when $w[i\dd n-1]$, which corresponds to a leaf, is extended to the left to $w[i-1\dd n-1]$, then $\Pcode(w[i-1\dd n-1])$ has exactly $1<\tau$ occurrence, so the leaf is left-$\tau$-maximal. Then, we perform a DFS traversal on $\OPST(w)$ to 
identify whether there are incoming suffix links (each suffix link is checked only once) from $\tau$-frequent nodes, to check condition I \emph{and} propagate  upward whether a node is left-$\tau$-maximal or not, to check condition II. We output one witness fragment, as an interval on $w$, for every $\tau$-maximal $\tau$-frequent OP pattern we find; in particular, we report $[\Witness(v),\Witness(v)+\Depth(v)-1]$. Since any DFS traversal takes linear time and we have $\cO(n)$ explicit nodes and edges, the total time is $\cO(n)$ provided that $\OPST(w)$ is given. We have thus obtained the following result: 

\begin{theorem}\label{the:maximal}
For any string $w$ of length $n$ and any integer $\tau>1$,
we can list all $\tau$-maximal $\tau$-frequent OP patterns in $w$ 
in $\cO(n)$ time using $\cO(n)$ space if $\OPST(w)$ is given.
\end{theorem}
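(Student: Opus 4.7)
The plan is a two-phase bottom-up processing of $\OPST(w)$: Phase I narrows the search to branching nodes whose OP pattern is both $\tau$-frequent and right-$\tau$-maximal; Phase II then keeps those among them that are additionally left-$\tau$-maximal, which by definition are exactly the $\tau$-maximal $\tau$-frequent OP patterns. For each surviving node $v$ I would report the interval $[\Witness(v),\Witness(v)+\Depth(v)-1]$, which by construction is a witness occurrence of the pattern represented by $v$.

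For Phase I, the key observation, using \cref{lem:locus}, is that the frequency of an OP pattern equals the number of leaves beneath its locus in $\OPST(w)$, and that a right-extension by one letter corresponds to descending one step in the tree. Thus a single DFS that accumulates leaf counts suffices: I would mark as a \emph{candidate} every branching node $v$ whose subtree contains at least $\tau$ leaves while every child's subtree contains strictly fewer than $\tau$ leaves. By \cref{lem:OPSTsize}, there are $\cO(n)$ explicit nodes and edges, so this phase runs in $\cO(n)$ time and space.

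For Phase II, I would compute left-$\tau$-maximality recursively by a second bottom-up DFS. The base case is that every leaf is left-$\tau$-maximal, since its only left-extension is a longer suffix of $w$ occurring exactly once. For an internal node $v$, I declare $v$ left-$\tau$-maximal iff (I) no $\tau$-frequent explicit node $u$ satisfies $\Slink(u)=v$, and (II) every child of $v$ is left-$\tau$-maximal. The reason this characterizes left-$\tau$-maximality is that if a left-extension $ap$ of $v$'s pattern $p$ is $\tau$-frequent, then the explicit node at the locus of $ap$ either has $v$ as its suffix-link target (caught by (I)), or it lies strictly deeper than $|p|+1$, in which case its suffix link lands at a strict descendant of $v$ that is itself $\tau$-frequent, forcing some child of $v$ to fail left-$\tau$-maximality (caught by (II)). Conversely, if (I) and (II) both hold at $v$, no $\tau$-frequent explicit node corresponds to a left-extension of $p$, so $v$ is left-$\tau$-maximal. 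Since left-non-maximality propagates upward, intersecting this attribute with the candidate set gives exactly the $\tau$-maximal $\tau$-frequent patterns.

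The step I expect to be the main obstacle is verifying that conditions (I) and (II) together are \emph{tight}, and in particular that $\tau$-frequent left-extensions whose loci are implicit (lying on an edge) are always accounted for by an explicit $\tau$-frequent descendant whose suffix link falls strictly inside the subtree of $v$; this needs a careful use of the identity $\Slink(\textsf{\small Locus}(x))=\textsf{\small Locus}(x[1\dd|x|-1])$ together with the monotonicity of leaf counts under suffix links. Once this is settled, the complexity is immediate: each explicit node and edge is visited $\cO(1)$ times across the two DFS traversals, and each suffix link is inspected only once at its target; by \cref{lem:OPSTsize} the total time and auxiliary space are $\cO(n)$, yielding the claimed bounds.
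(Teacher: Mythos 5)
Your proposal matches the paper's proof essentially step for step: Phase I marks candidates by a single DFS that accumulates leaf counts (your "every child has $<\tau$ leaves" condition is equivalent to the paper's "no descendant has $\geq\tau$ leaves," by monotonicity of leaf counts), and Phase II checks left-$\tau$-maximality via the same two conditions --- no incoming suffix link from a $\tau$-frequent node, and all children left-$\tau$-maximal --- propagated bottom-up, reporting the witness interval $[\Witness(v),\Witness(v)+\Depth(v)-1]$. If anything, you supply a more explicit justification than the paper for why conditions (I) and (II) are sufficient (tracing a $\tau$-frequent left-extension's locus to an explicit branching descendant whose suffix link lands strictly inside $v$'s subtree), a point the paper leaves largely to the reader; your flagged concern about implicit loci is resolved exactly as you anticipate by descending from $\Fdown$ to the first branching node and using monotonicity of occurrence counts under suffix links.
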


\begin{example}\label{ex:maximal}
    Let $w=1~2~4~4~2~5~5~1$ (see Fig.~\ref{fig:OIST}) and $\tau=2$. 
    In Phase I, the algorithm marks nodes $v_1$, $v_2$, and $v_4$ as candidates, as they have at least $2$ leaf descendants and no descendant of $v_1$, $v_2$, and $v_4$ has itself at least $2$ leaf descendants. 
   The OP patterns they represent (i.e., $(\perp,\perp)~(\perp, 0)$; $(\perp,\perp)~(0,\perp)~(1,1)$; and $(\perp,\perp)~(0,0)~(\perp,1)$) are all right-$2$-maximal and $2$-frequent. In Phase II,  
   $v_2$ and $v_4$  
   are further identified as left-$2$-maximal: neither node has an incoming suffix link from another $2$-frequent node (condition I) and  all their children are leaf nodes, which are left-$2$-maximal (condition II).  
  We obtain the following $2$-maximal $2$-frequent OP patterns: (I) $(\perp,\perp)~(0,\perp)~(1,1)$, for node $v_2$: it corresponds to the two fragments $2~4~4$, and $2~5~5$ of $w$; and we report interval $[1,3]$ as a witness; and (II) $(\perp,\perp)~(0,0)~(\perp,1)$, for node $v_4$: it corresponds to the two fragments $4~4~2$ and $5~5~1$ of $w$; and we report interval $[2,4]$ as a witness. 
\end{example}

\subsection{Mining Closed Frequent OP Patterns}\label{sec:closed} 
Recall that $w$ is a string of length $n$. An OP pattern $p=\Pcode(x)$ for a string $x$ is:  
\begin{itemize}
    \item \emph{right-closed} in $w$, if there exists a substring $w[i\dd j]\approx x$ 
    whose $\Pcode(w[i\dd j])$ is the $\Pcode$ of strictly more fragments of $w$ than $\Pcode(w[i\dd j+1])$; 
    \emph{left-closedness} is defined analogously. 
    \item \emph{closed}, if it is left-closed and right-closed. 
\end{itemize}
We may characterize nodes of \OPST based on the properties of the patterns they represent, e.g., right-closed, or closed. 

We show an $\cO(n)$-time and $\cO(n)$-space algorithm for mining \emph{all closed $\tau$-frequent OP patterns in $w$} given $\OPST(w)$. 

\paragraph{Preprocessing.}~We preprocess $\OPST(w)$ to answer lowest common ancestor (LCA) queries~\cite{DBLP:conf/latin/BenderF00}  using Lemma~\ref{lem:lca}. These queries are key to the efficiency of our algorithm. 

\begin{lemma}[\cite{DBLP:conf/latin/BenderF00}]\label{lem:lca}
Any rooted tree $T$ on $N$ nodes can be preprocessed in $\cO(N)$ time to support the following $\cO(1)$-time queries: given any two nodes $u$ and $v$ of $T$ return node $\textsf{LCA}(u,v)$, that is, the LCA of $u$ and $v$ in $T$.
\end{lemma}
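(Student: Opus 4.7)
The plan is to follow the classical reduction of LCA to Range Minimum Query (RMQ) and then solve RMQ in $\cO(N)$ preprocessing and $\cO(1)$ query time using the $\pm 1$ trick. First, I would perform an Euler tour of $T$ starting from the root, recording two arrays of length $2N-1$: the sequence $E$ of visited nodes and the sequence $D$ of their depths. For every node $u$ I also store $F[u]$, the index of its first appearance in $E$. The key observation is that $\textsf{LCA}(u,v)=E[\arg\min_{k\in[\min(F[u],F[v]),\,\max(F[u],F[v])]} D[k]]$, so LCA queries reduce to RMQ queries on $D$.

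Next I would solve RMQ on $D$, exploiting the fact that $D$ has the $\pm 1$ property: consecutive entries differ by exactly $1$. Partition $D$ into blocks of size $b=\tfrac{1}{2}\log N$. For each block compute its minimum value and position; there are $\cO(N/\log N)$ such blocks, and on the array of block minima I would build a standard sparse table in $\cO((N/\log N)\cdot \log(N/\log N))=\cO(N)$ time and space, answering any block-aligned range-minimum query in $\cO(1)$.

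The remaining task is in-block queries. Because of the $\pm 1$ property, each block is uniquely determined (up to an additive offset that does not affect the position of the minimum) by its sequence of $b-1$ increments in $\{-1,+1\}$. Hence there are at most $2^{b-1}=\cO(\sqrt{N})$ distinct block types. For each type I would precompute a table of answers for all $\cO(b^2)=\cO(\log^2 N)$ intra-block ranges; this costs $\cO(\sqrt{N}\cdot \log^2 N)=\cO(N)$ time and space. A general RMQ query on $D$ is then answered in $\cO(1)$ by combining the minimum of the (at most two) partial blocks at the endpoints with the sparse-table minimum over the fully contained blocks; translating back via $E$ yields $\textsf{LCA}(u,v)$.

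The main obstacle, and the only part that needs real care, is the in-block precomputation: arguing that $\pm 1$ blocks collapse into only $\cO(\sqrt{N})$ equivalence classes and that one can compute a canonical type identifier for each block in $\cO(b)$ time (e.g., by reading the sign bits of the increments) so that the total preprocessing remains $\cO(N)$. Once that is in place, everything else is a standard composition of the Euler-tour reduction with the two-level $\pm 1$ RMQ structure, giving the claimed $\cO(N)$ preprocessing and $\cO(1)$ query bounds.
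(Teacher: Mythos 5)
Your proposal is correct and is essentially the proof from the cited source~\cite{DBLP:conf/latin/BenderF00}: the Euler-tour reduction of LCA to $\pm 1$ RMQ followed by the two-level block decomposition with a sparse table over block minima and $\cO(\sqrt{N})$ precomputed in-block lookup tables. The paper states this lemma as a black-box citation and does not reprove it, so there is nothing further to compare.
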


\paragraph{Phase I.}~It marks each branching node of $\OPST(w)$ with at least $\tau$ leaf descendants as a \emph{candidate} using a DFS traversal. Since OP patterns ending at branching nodes are by construction right-closed, the candidate nodes represent all OP patterns that are 
right-closed \emph{and} $\tau$-frequent.  

\paragraph{Phase II.}~It identifies each candidate that is \emph{also} left-closed, to obtain all closed $\tau$-frequent OP patterns. 

A candidate node $v$ is left-closed if it has at least two witnesses with different codes when extended to the left by a single letter; i.e., $v$ has at least two leaf descendants $i$ and $j$ such that 
$w[i \dd i + \Depth(v) -1] \approx w[j \dd j + \Depth(v) -1]$ but 
$w[i - 1 \dd i + \Depth(v)-1]\not\approx w[j-1\dd j + \Depth(v) -1]$ (recall that $\Depth(v)$ is the string depth of $v$). An OP pattern is closed $\tau$-frequent if and only if it labels a branching node $v$ that is left-closed \emph{and} has at least $\tau$ leaf descendants. 

We next show how left-closed candidates are identified. 
For each candidate node $v$, we need to extend each of its leaf descendants $w[\ell\dd n-1]$ to $w[\ell-1\dd n-1]$ (by a single  letter to the left). Let $L_v$ be the set of the leaf descendants of $v$, each represented by its label. The crucial idea is to check if $v$ is left-closed by computing the string depth of the LCA of leaf nodes $\{\ell-1:\ell\in L_v\}$, denoted by $val(v)$.  
Clearly, if $val(v)<\Depth(v)+1$, then it is because at least two leaf descendants have different codes on their left, and so $v$ is left-closed. This approach avoids the pairwise comparison of the $n$ $\Pcode$'s of the leaf nodes, which would result in $\Omega(n^2)$ time, instead of the $\cO(n)$ time of our approach.  

\begin{example}\label{ex:depth}
    Node $v_4$ in Fig.~\ref{fig:OIST} has $\Depth(v_4) = 3$ and $L_{v_4} = \{5, 2\}$. 
    Since $\{\ell-1: \ell\in L_{v_4}\}=\{4,1 \}$, and 
    $\textsf{LCA}(\{4,1\})$ (i.e., $v_2$) has string depth $3$, we have $val(v_4)=3<\Depth(v_4)+1=3+1=4$. 
    Thus, $v_4$ is left-closed. 
    Node $v_3$ has $ \Depth(v_3) = 2$, $L_{v_3}=\{5,2\}$, $\{\ell-1 : \ell\in L_{v_3}\}=\{4,1\}$, and 
    $\textsf{LCA}(\{4,1\})$ has string depth $3$. Since $val(v_3)=3\geq \Depth(v_3)+1=3$, $v_3$ is \emph{not} left-closed.  
\end{example}

In the standard suffix tree, $val(v)$ is either $\Depth(v)+1$ or $0$: $val(v)=0$ means that $v$ and all its ancestors are left-closed and thus finding closed patterns can be done easily using a DFS. In \OPST, however, $val(v)\in[0, \Depth(v)+1]$, and it implies left-closedness for $v$ \emph{only} if $val(v)<\Depth(v)+1$ (and this propagates only to ancestors that are deep enough). \cref{obs} highlights this essential difference.

\begin{observation}\label{obs}
If $w[i+1\dd j]=w[i'+1\dd j']$ and $w[i\dd j-1]=w[i'\dd j'-1]$ then $w[i\dd j]=w[i'\dd j']$; but $w[i+1\dd j]\approx w[i'+1\dd j']$ and $w[i\dd j-1]\approx w[i'\dd j'-1]$ do not imply $w[i\dd j]\approx w[i'\dd j']$ in the OP setting.
\end{observation}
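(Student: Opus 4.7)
The plan is to prove this observation in two independent parts: the first (equality) statement by a routine overlap argument, and the second (OP) statement by exhibiting a small counterexample.

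For the equality statement, both hypotheses force the substrings to have the same length, so I set $m = j-i = j'-i'$. The hypothesis $w[i\dd j-1] = w[i'\dd j'-1]$ yields $w[i+k] = w[i'+k]$ for all $k \in \{0,1,\ldots,m-1\}$, while $w[i+1\dd j] = w[i'+1\dd j']$ yields $w[i+k] = w[i'+k]$ for all $k \in \{1,2,\ldots,m\}$. The union of the two index ranges is $\{0,1,\ldots,m\}$, so $w[i\dd j] = w[i'\dd j']$, as required.

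For the OP part, I would exhibit a concrete three-letter counterexample whose length-two prefix and length-two suffix are each OP-equivalent but whose full lengths are not. Take $x = 2~1~3$ and $y = 3~1~2$. The suffixes $x[1\dd 2]=1~3$ and $y[1\dd 2]=1~2$ are both strictly increasing, so $x[1\dd 2] \approx y[1\dd 2]$; similarly, the prefixes $x[0\dd 1]=2~1$ and $y[0\dd 1]=3~1$ are both strictly decreasing, so $x[0\dd 1] \approx y[0\dd 1]$. Yet the rank sequence of $x$ is $(2,1,3)$ and the rank sequence of $y$ is $(3,1,2)$, so $x \not\approx y$ by \cref{lem:ois}. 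Embedding $x$ and $y$ as fragments of a longer string $w$ (say $w = 2~1~3~3~1~2$ with $(i,j)=(0,2)$ and $(i',j')=(3,5)$) gives the required counterexample in the form stated.

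The only real subtlety is selecting the counterexample. The conceptual obstacle, which I would highlight after the example, is that OP-equivalence is determined by \emph{all} pairwise comparisons among positions, whereas two overlapping OP-equivalent proper substrings fix only the comparisons involving positions that appear in at least one of them together; the comparison between the leftmost and rightmost positions of the combined string is left free, and the example above exploits exactly this freedom. No further machinery is required.
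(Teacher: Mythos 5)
The paper states this observation without proof, so there is no authorial argument to compare against; you have supplied one, and it is correct. Your first part is a routine overlap argument: with $m=j-i=j'-i'$, the prefix equality fixes positions $0,\ldots,m-1$ and the suffix equality fixes positions $1,\ldots,m$, so their union covers all of $\{0,\ldots,m\}$ and hence $w[i\dd j]=w[i'\dd j']$. Your second part gives the minimal counterexample $x=2\,1\,3$, $y=3\,1\,2$ (embedded in $w=2\,1\,3\,3\,1\,2$), which indeed has OP-equivalent length-$2$ prefixes and suffixes but distinct rank sequences $(2,1,3)$ versus $(3,1,2)$, so $x\not\approx y$ by Lemma~\ref{lem:ois}. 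Your closing remark identifies precisely why equality behaves differently from OP-equivalence here: the prefix $[i\dd j-1]$ constrains all pairs within $\{i,\ldots,j-1\}$ and the suffix $[i+1\dd j]$ constrains all pairs within $\{i+1,\ldots,j\}$, leaving exactly the pair $(i,j)$ unconstrained, and the example is built to disagree on that one comparison. Nothing is missing.
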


To compute $val(v)$ for all nodes $v$, instead of explicitly considering $L_v$ for each $v$,  
we use a DFS traversal on $\OPST(w)$ coupled with several LCA queries. For any leaf $v$, $val(v)$ is set to $\Depth(v)+1$. Since $val(v)<\Depth(v)+1$ does not currently hold, every leaf $v$ is not left-closed. For any other node $v$, $val(v)$ is computed as the minimum between the values $val$ of its children and of the string depths of the LCA's of $\Witness(v)-1$ and $\Witness(u)-1$ for every child $u$ of $v$. We compute these LCA's efficiently by employing \cref{lem:lca}: for any set of $k$ nodes $v_1,\ldots,v_k$, we can compute their LCA, denoted by $u_{k-1}$, in $\cO(k)$ time, by computing $u_1=\textsf{LCA}(v_1,v_2),u_2=\textsf{LCA}(u_1,v_3),\ldots, u_{k-1}=\textsf{LCA}(u_{k-2},v_k)$. Then node $v$ is left-closed if and only if $val(v)<\Depth(v)+1$. 
Since any DFS traversal takes linear time and we have $\cO(n)$ explicit nodes and edges, the total time is $\cO(n)$ provided that $\OPST(w)$ is given. We have thus proved the following result:

\begin{theorem}\label{the:closed}
For any string $w$ of length $n$ and any integer $\tau>1$,
we can list all closed $\tau$-frequent OP patterns of $w$ in $\cO(n)$ time using $\cO(n)$ space if $\OPST(w)$ is given.
\end{theorem}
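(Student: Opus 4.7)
The plan is to give a two-phase bottom-up algorithm on $\OPST(w)$ that, together with the $\cO(n)$-time LCA preprocessing of \cref{lem:lca}, identifies and reports every closed $\tau$-frequent OP pattern. In Phase~I I would execute a DFS accumulating leaf-counts and tag each branching node $v$ with at least $\tau$ leaf descendants as a \emph{candidate}. Every branching node is right-closed by construction: its outgoing edges carry distinct \Lcode\ labels, so at least one single-letter right extension of the represented pattern loses occurrences. Hence the candidates are exactly the right-closed $\tau$-frequent nodes, and this phase runs in $\cO(n)$ time and space.

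In Phase~II I would sift the candidates by left-closedness. Let $L_v$ denote the set of leaf labels in the subtree of $v$, and set $A_v = \{\ell-1 : \ell \in L_v,\ \ell \geq 1\}$, identified with the corresponding leaves of $\OPST(w)$. A candidate $v$ is left-closed iff some pair of leaves in $A_v$ fails to meet a common ancestor at string depth exceeding $\Depth(v)$; equivalently, iff
\[
val(v) \;:=\; \Depth\bigl(\textsf{LCA}(A_v)\bigr) \;<\; \Depth(v)+1.
\]
The singleton label $\ell = 0$ can be safely excluded from $A_v$, since a missing left extension of one suffix cannot by itself witness $\tau$-frequency when $\tau \geq 2$.

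The main obstacle is computing $val(v)$ for every node in $\cO(n)$ total time, since $|A_v|$ may be $\Theta(n)$. I would exploit the associativity of $\textsf{LCA}$: if $v$ has children $u_1,\dots,u_k$ then
\[
\textsf{LCA}(A_v) \;=\; \textsf{LCA}\bigl(\textsf{LCA}(A_{u_1}),\dots,\textsf{LCA}(A_{u_k})\bigr),
\]
so during the same DFS I would carry at each node $v$ a single representative $V_v := \textsf{LCA}(A_v)$: for an $\OPST$ leaf with label $\ell$, set $V_v$ to the $\OPST$ leaf labeled $\ell-1$ (or to an $\infty$-depth sentinel if $\ell = 0$); for an internal node, combine the $V_{u_i}$ of its children by $k-1$ constant-time pairwise LCA calls. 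An induction on the DFS order shows $\Depth(V_v) = val(v)$, and since the total number of LCA calls is bounded by the number of edges of $\OPST(w)$, which is $\cO(n)$ by \cref{lem:OPSTsize}, Phase~II also runs in $\cO(n)$ time and $\cO(n)$ space.

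To report the patterns, for every candidate $v$ with $val(v) < \Depth(v)+1$ I would emit the interval $[\Witness(v),\,\Witness(v)+\Depth(v)-1]$ on $w$; its $\Pcode$ is exactly the pattern stored at $v$ by construction of $\OPST(w)$, giving a compact witness for the corresponding closed $\tau$-frequent OP pattern. Combining the two DFS phases with the $\cO(n)$ LCA preprocessing yields the claimed $\cO(n)$-time $\cO(n)$-space bound.
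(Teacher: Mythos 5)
Your proposal is correct and follows essentially the same two-phase strategy as the paper: Phase~I marks branching nodes with at least $\tau$ leaf descendants as right-closed $\tau$-frequent candidates, and Phase~II detects left-closedness by computing, bottom-up with $\cO(n)$ total LCA queries, the string depth of the LCA of the left-shifted leaf set $A_v$ and comparing it to $\Depth(v)+1$. The only (cosmetic) difference is bookkeeping: you carry the actual node $V_v=\textsf{LCA}(A_v)$ up the DFS, whereas the paper maintains $val(v)$ as a depth and combines children's values with LCAs of witness representatives---both are equivalent, and your variant is arguably the cleaner one to verify; your explicit treatment of the $\ell=0$ boundary is also a sensible addition.
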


\begin{example}   Let $w=1~2~4~4~2~5~5~1$ (see  Fig.~\ref{fig:OIST}) and $\tau=2$. 
In Phase I, all branching nodes are 
marked as candidates. In Phase II, we obtain $4$ closed $2$-frequent OP patterns: $(\perp,\perp)$;   
$(\perp,\perp)~(0,\perp)$; 
$(\perp,\perp)~(0,\perp)~(1,1)$; and  
$(\perp,\perp)~(0,0)~(\perp,1)$. 
For example, node $v_4$ is left-closed; it is also branching with $2$ leaf descendants. Thus, it represents a closed $2$-frequent OP pattern, namely $(\perp,\perp)~(0,0)~(\perp,1)$. This pattern corresponds to the fragments $4~4~2$ and $5~5~1$ of $w$. On the other hand, node $v_1$ is not left-closed, as $val(v_1)=3=\Depth(v_1)+1$.   
\end{example}

\section{Related Work}\label{sec:related}

Order-preserving indexing~\cite{DBLP:journals/ipl/KubicaKRRW13}
and pattern matching~\cite{DBLP:journals/tcs/CrochemoreIKKLP16} have been studied heavily by the theory community.  However, only very recently practical mining algorithms were  developed. Such algorithms mine various types of OP patterns (frequent~\cite{xindong_opp, tkde_opp}, maximal frequent~\cite{xindong_opp}, co-occurrence~\cite{tkdd_opp}, or 
top-$k$ contrast~\cite{topkopp}),  
or OP rules~\cite{tkde_opp}. The most relevant algorithm to our work is \MOPPM~\cite{xindong_opp}. This algorithm utilizes an efficient pattern matching algorithm, an Apriori-like enumeration strategy, and pruning strategies. Its time and space complexity is $\cO(n\cdot \mu \cdot L)$ and $\cO(\mu\cdot(L+n))$, respectively, where $n$ is the length of the input time series, $\mu$ is the maximum length of a candidate frequent OP pattern, and $L$ is the number of fixed-length candidate frequent OP patterns. Since in the \emph{worst case}, $\mu=\Theta(n)$ and $L=\Theta(n)$, \MOPPM needs $\Omega(n^3)$ time and $\Omega(n^2)$  space. 
Thus, our algorithm for maximal frequent OP patterns is  
faster theoretically and practically; see \cref{the:maximal} and \cref{sec:experiments}, respectively.    

\section{Experimental Evaluation}\label{sec:experiments} 

\paragraph{Datasets.}~We used $5$ publicly available, large-scale datasets from different domains, as well as a proprietary dataset. The proprietary dataset, called \ECG, contains the ECG recordings of 40 participants in a study on exercise-induced pain, which was approved by our Ethics board and performed according to the Declaration of Helsinki. 
The datasets  have different length $n$ and alphabet size $\sigma$ (see Table~\ref{tab:datasets} and \cite{Supplement2024} for details).  
We also used $2$ datasets  
from UCR (available at \url{https://bit.ly/4bCP83m}) that are  
comprised of multiple time series (see \cite{Supplement2024} for details) in our clustering experiments. 

\begin{table}[!t]
\caption{Dataset characteristics}\label{tab:datasets}
\centering
\begin{tabular}{|c||c|c||c|}
\hline
\textbf{Datasets}                   & \bm{$n$}                       &\bm{$\sigma$} & \textbf{Available at} \\ \hline \hline
Household (\HOU)                                & 6,147,840               & 88  & \url{https://bit.ly/4dxHJ75}                                             \\ \hline
Solar (\SOL)                                    & 7,148,160               & 1,666 & \url{https://bit.ly/44y9Wqd}                                              \\ \hline
ECG\_{Pain} (\ECG)                               & 22,973,535              & 31,731 & Proprietary                                          \\ \hline
Traffic (\TRA)                                 & 15,122,928                  & 6,176 & \url{https://bit.ly/4dx4tV4}                                           \\ \hline
Temperature (\TEM)                             & 114,878,720                  & 76 & \url{https://bit.ly/4apQjlT}                                            \\ \hline
Whales (\WHA)                                 & 308,941,605               & 554,635 & \url{https://bit.ly/44ssCbb}                                           \\ \hline
\end{tabular}
\end{table}

\begin{figure*}[!t]
\centering
\begin{subfigure}[t]{0.25\textwidth}
    \centering
    \includegraphics[width=\linewidth]{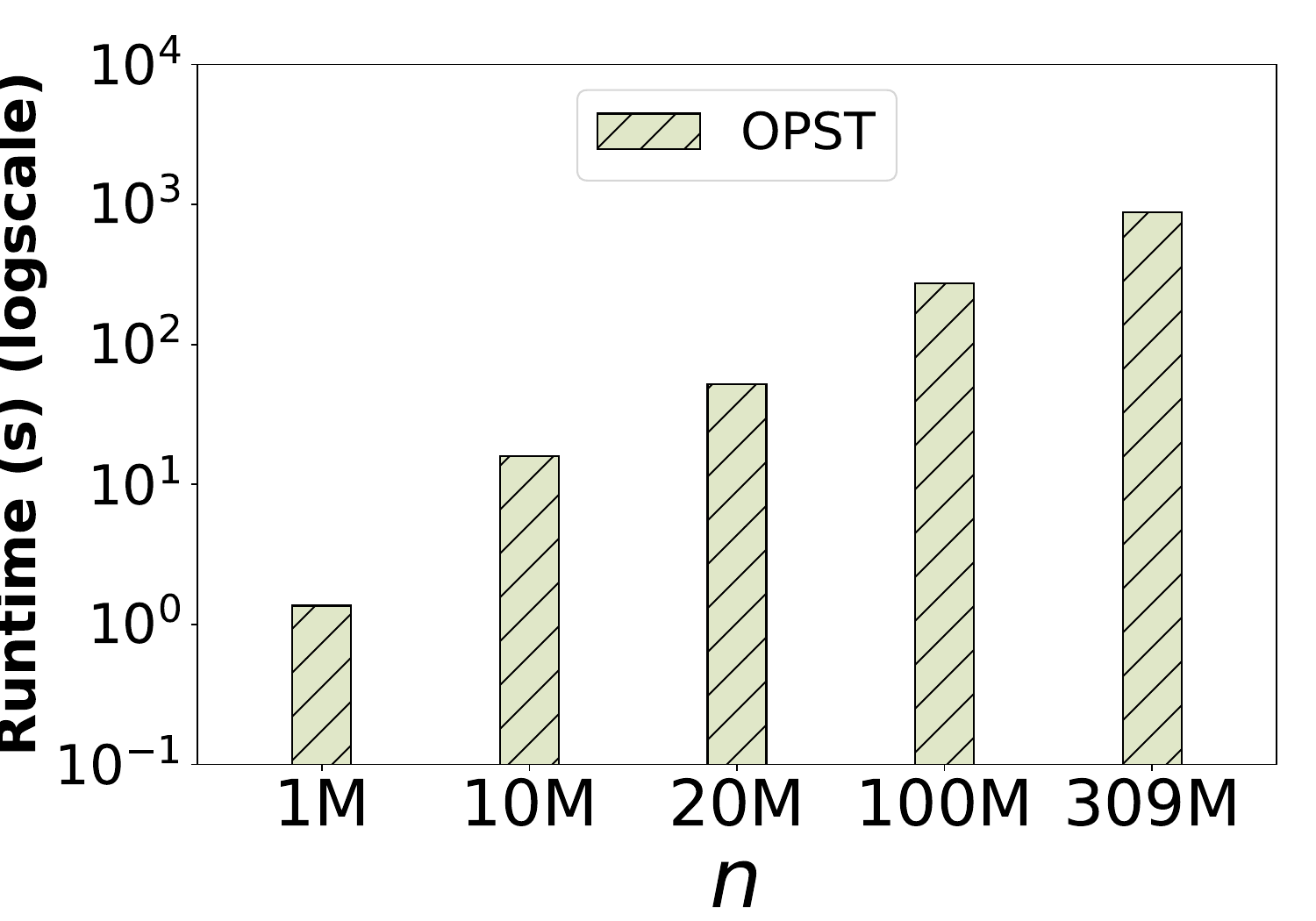}
    \caption{\WHA}\label{fig:runtime_opst_n}
\end{subfigure}%
\hfill
\begin{subfigure}[t]{0.25\textwidth}
    \centering
    \includegraphics[width=\linewidth]{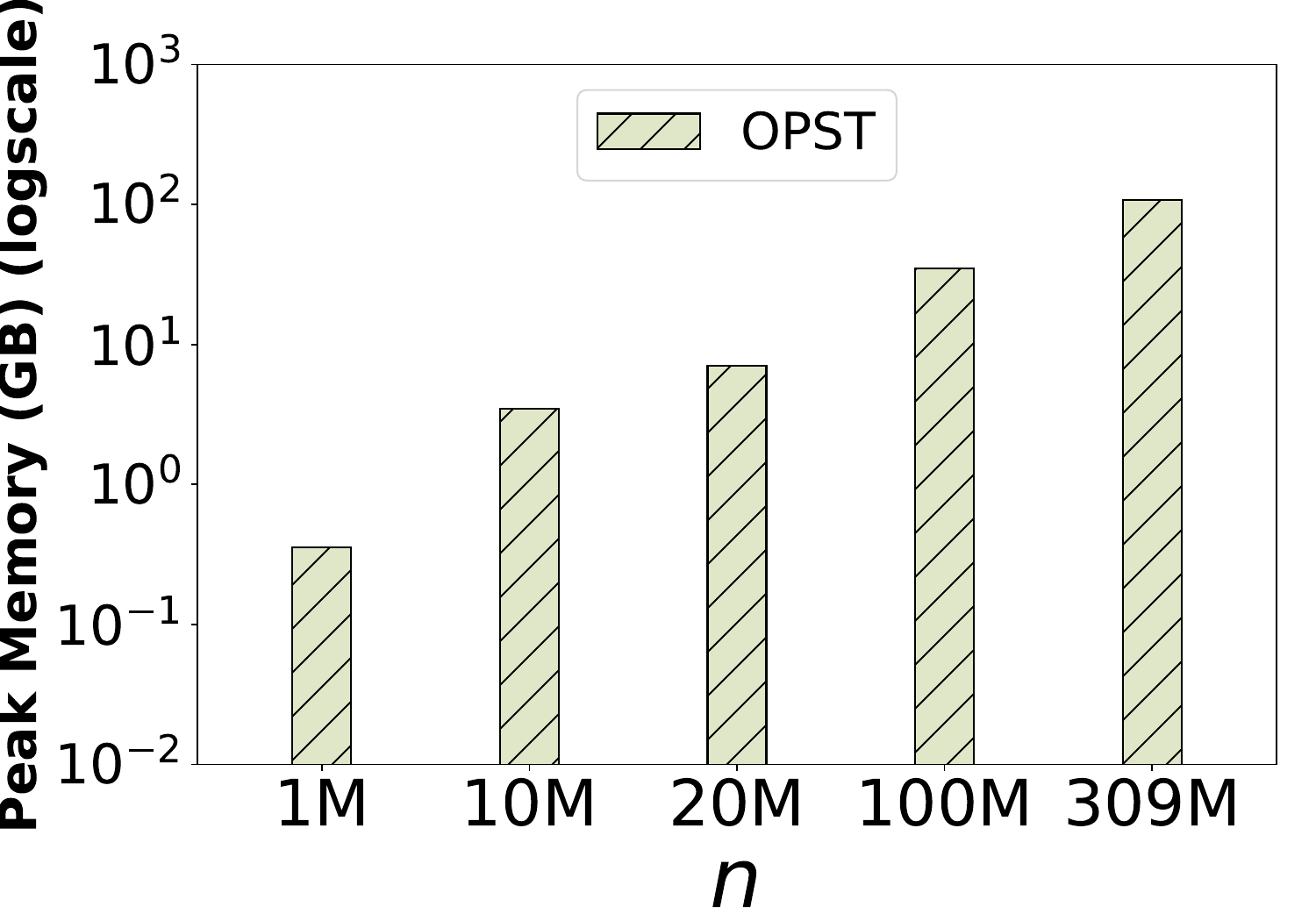}
    \caption{\WHA}\label{fig:mem_opst_n}
\end{subfigure}%
\hfill
\begin{subfigure}[t]{0.25\textwidth}
    \centering
    \includegraphics[width=\linewidth]{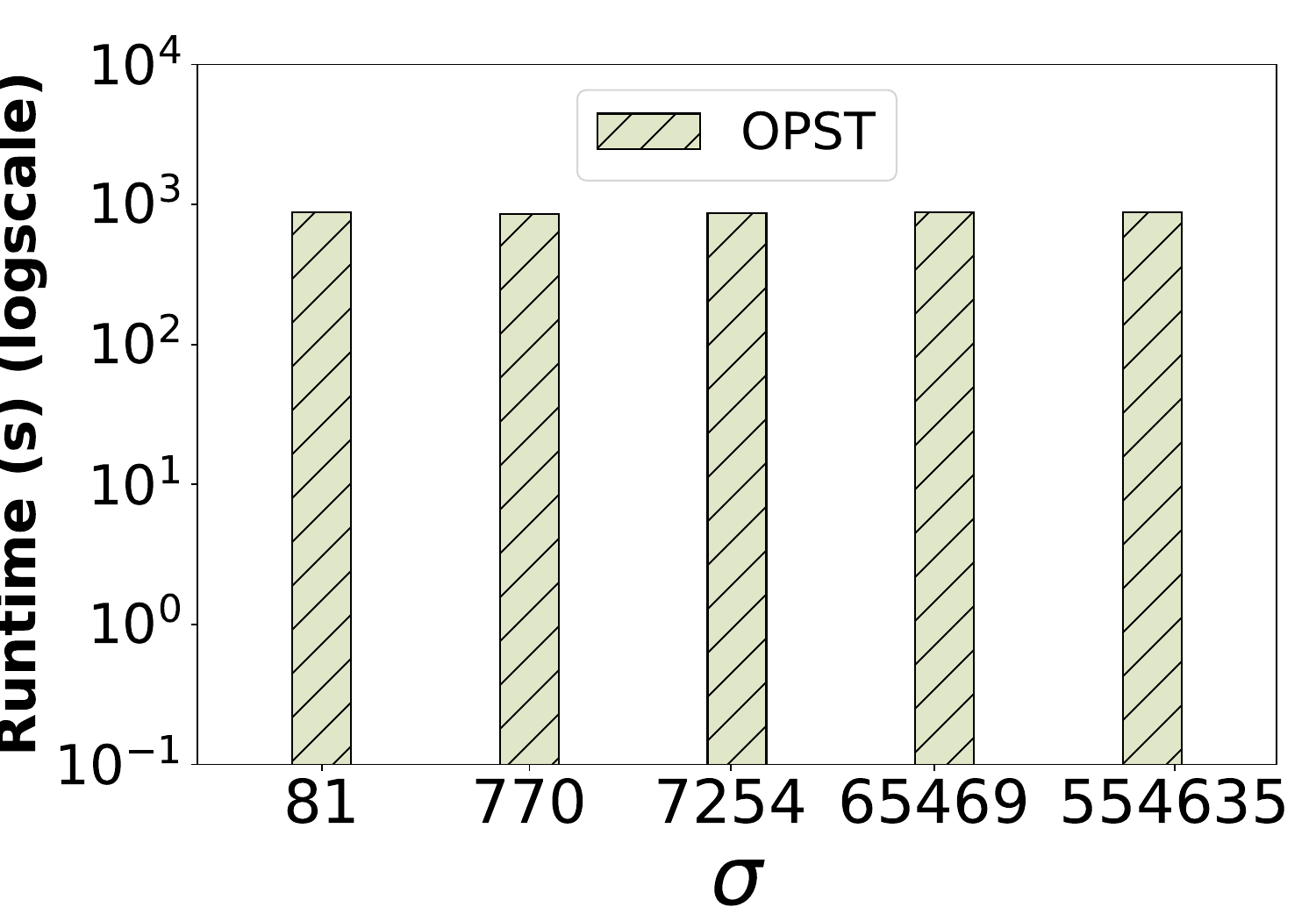}
    \caption{\WHA}\label{fig:runtime_opst_sigma}
\end{subfigure}%
\hfill
\begin{subfigure}[t]{0.25\textwidth}
    \centering
    \includegraphics[width=\linewidth]{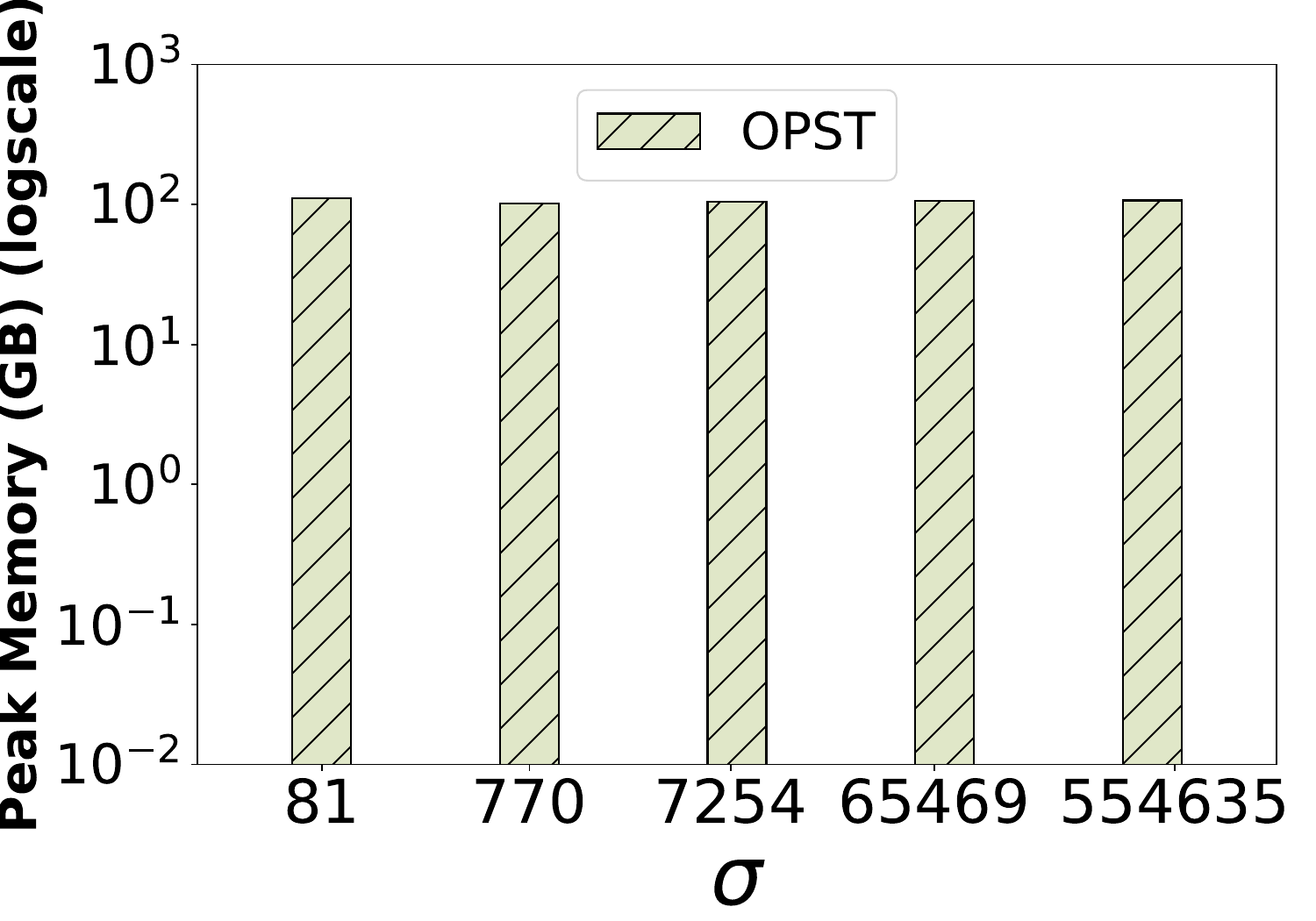}
    \caption{\WHA}\label{fig:mem_opst_sigma}
\end{subfigure}%

\caption{\OPST (construction algorithm): (a) Runtime and (b) peak memory consumption for varying $n$. (c) Runtime and (d) peak memory consumption for varying $\sigma$.}\label{fig:opstWHA}
\end{figure*}

\paragraph{Setup.}~We denote our $\cO(n\sigma \log\sigma)$-time \OPST construction algorithm by \OPST and our algorithm for mining maximal (respectively, closed) frequent OP patterns by \OPSTMP (respectively, \OPSTCP). We compared \OPSTMP to \MOPPM \cite{xindong_opp} (\MOPP) (see Section~\ref{sec:related}) and to a natural Apriori-like~\cite{apriori}  baseline (\BAMP), which takes $\cO(nk\log k)$ time and $\cO(n)$ space, where $k$ is the length of the longest frequent OP pattern output. \BAMP considers increasingly longer substrings and stops when no substring is frequent.   
We compute the $\textsf{\small PrefCode}$'s    
of substrings of $w$ up to length $k$ in $\cO(\log k)$ time per substring using a sliding window technique (see~\cite{Supplement2024} for details). This way we know the frequency of each such $\Pcode$, as well as the frequency of the two $\Pcode$'s corresponding to one letter shorter substrings -- the solutions for \BAMP follow directly. Since no existing algorithm mines closed frequent OP patterns, we compared \OPSTCP to \BACP, an $\cO(nk\log k)$-time and $\cO(n)$-space baseline that is similar to \BAMP (see \cite{Supplement2024} for details). By default, we used $\tau=10$. 

All our experiments ran on an AMD EPYC 7702 
CPU with 1TB RAM. All algorithms are implemented in \texttt{C++}. 
Our source code is available at \url{https://bit.ly/3WssaYf}.

\paragraph{OPST Construction.}~Fig.~\ref{fig:opstWHA} shows the impact of $n$ and $\sigma$ on the runtime and peak memory consumption (construction space) of \OPST. Our algorithm scaled linearly with $n$ in terms of runtime, and its runtime was not affected by $\sigma$ despite the $\cO(n\sigma \log \sigma)$ bound (even when $\sigma$ exceeded $5.5\cdot 10^5$). Our algorithm is efficient; it required about $875$ seconds for the entire \WHA dataset ($n\approx 309\cdot 10^6$). In terms of space, our algorithm scaled linearly with $n$, and as expected by its $\cO(n)$ space complexity, $\sigma$ did not affect its space requirements.   
Similar results 
for all other datasets are reported in \cite{Supplement2024}.

\begin{figure*}[!ht]
\centering 
\begin{subfigure}[t]{0.248\textwidth}
    \centering
    \includegraphics[width=1\linewidth]{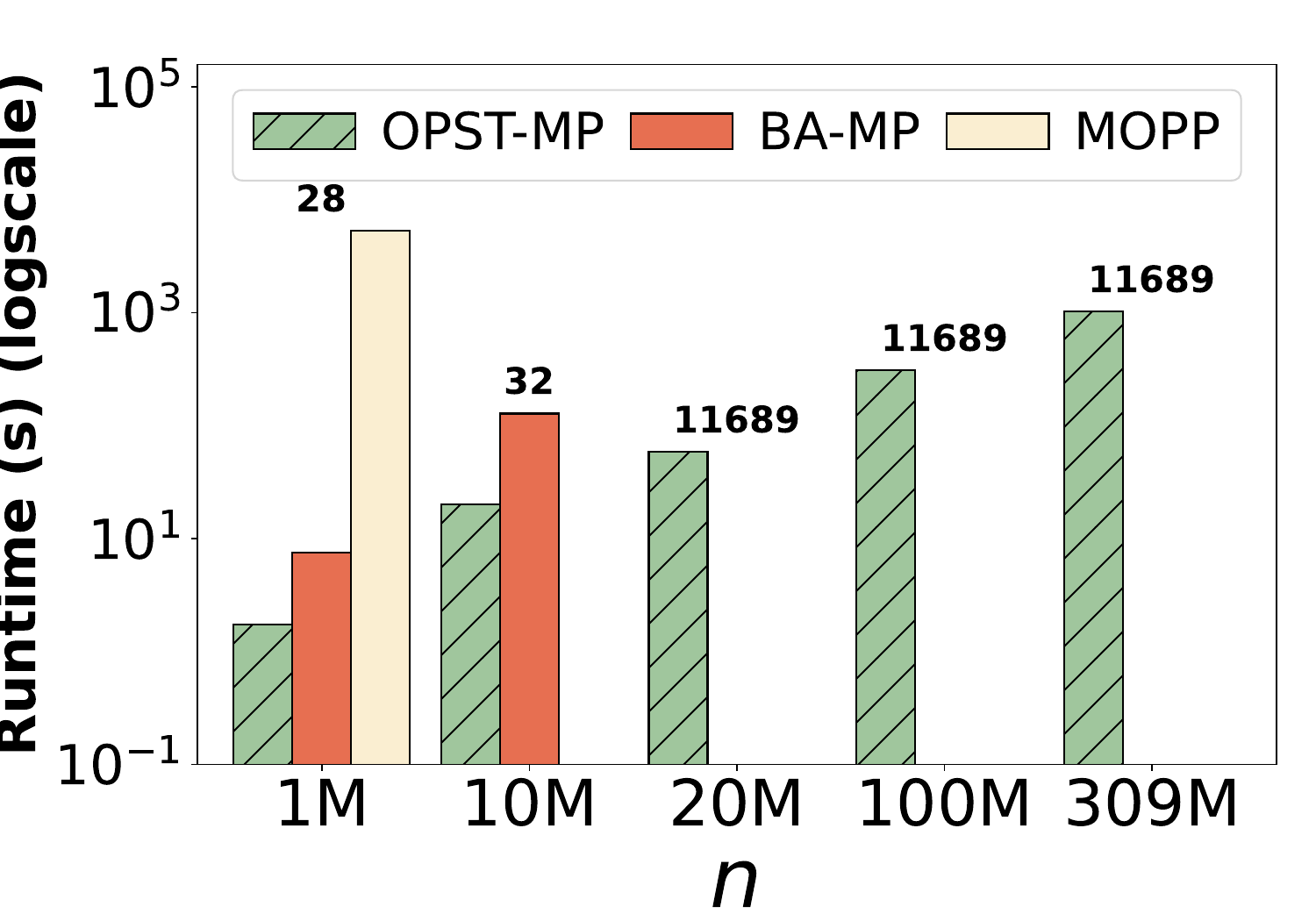}
    \caption{\WHA}\label{fig:runtime_mp_n}
\end{subfigure}%
\hfill
\begin{subfigure}[t]{0.248\textwidth}
    \centering
    \includegraphics[width=1\linewidth]{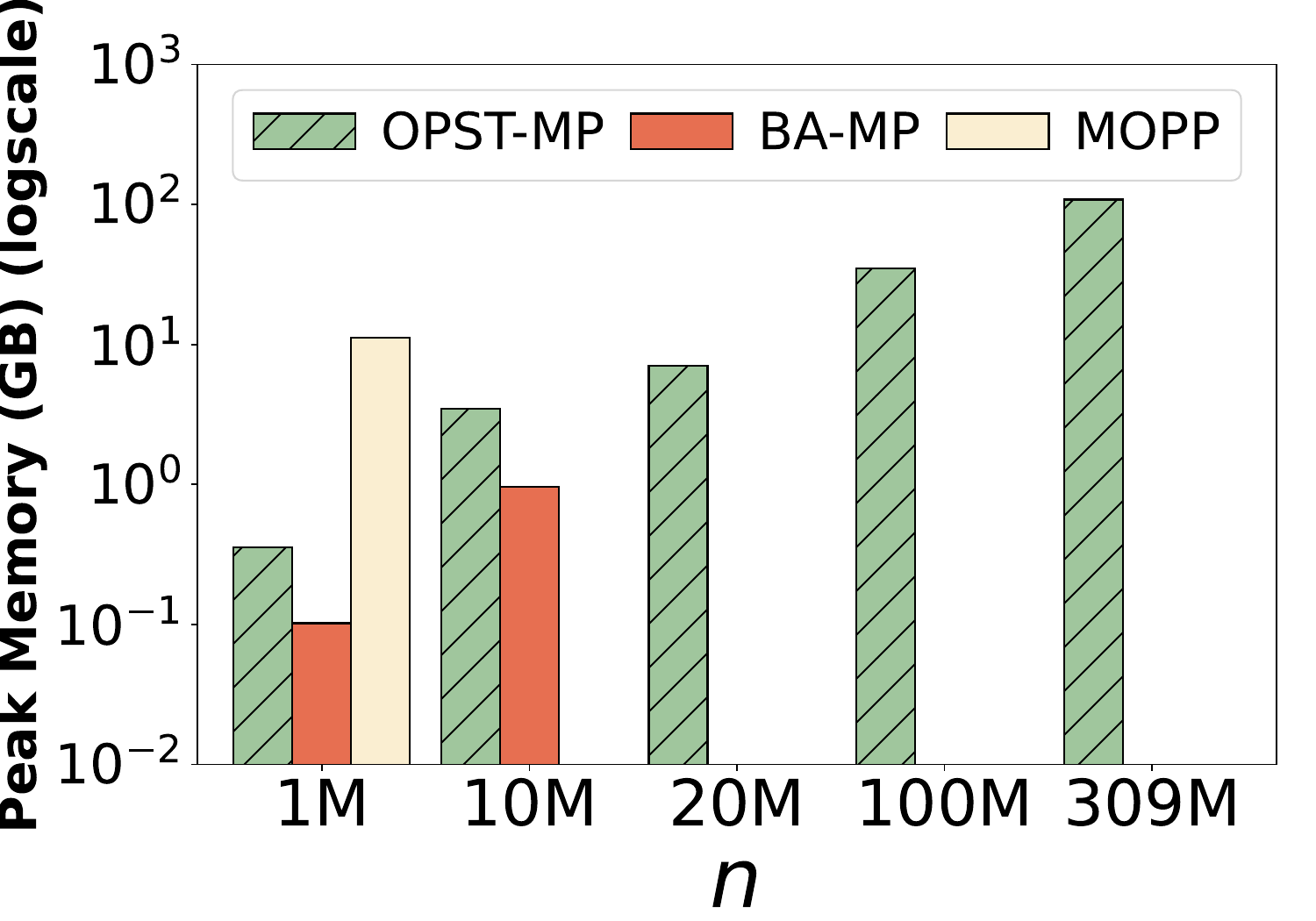}
    \caption{\WHA}\label{fig:mem_mp_n}
\end{subfigure}
\hfill
\begin{subfigure}[t]{0.248\textwidth}
    \centering
    \includegraphics[width=1\linewidth]{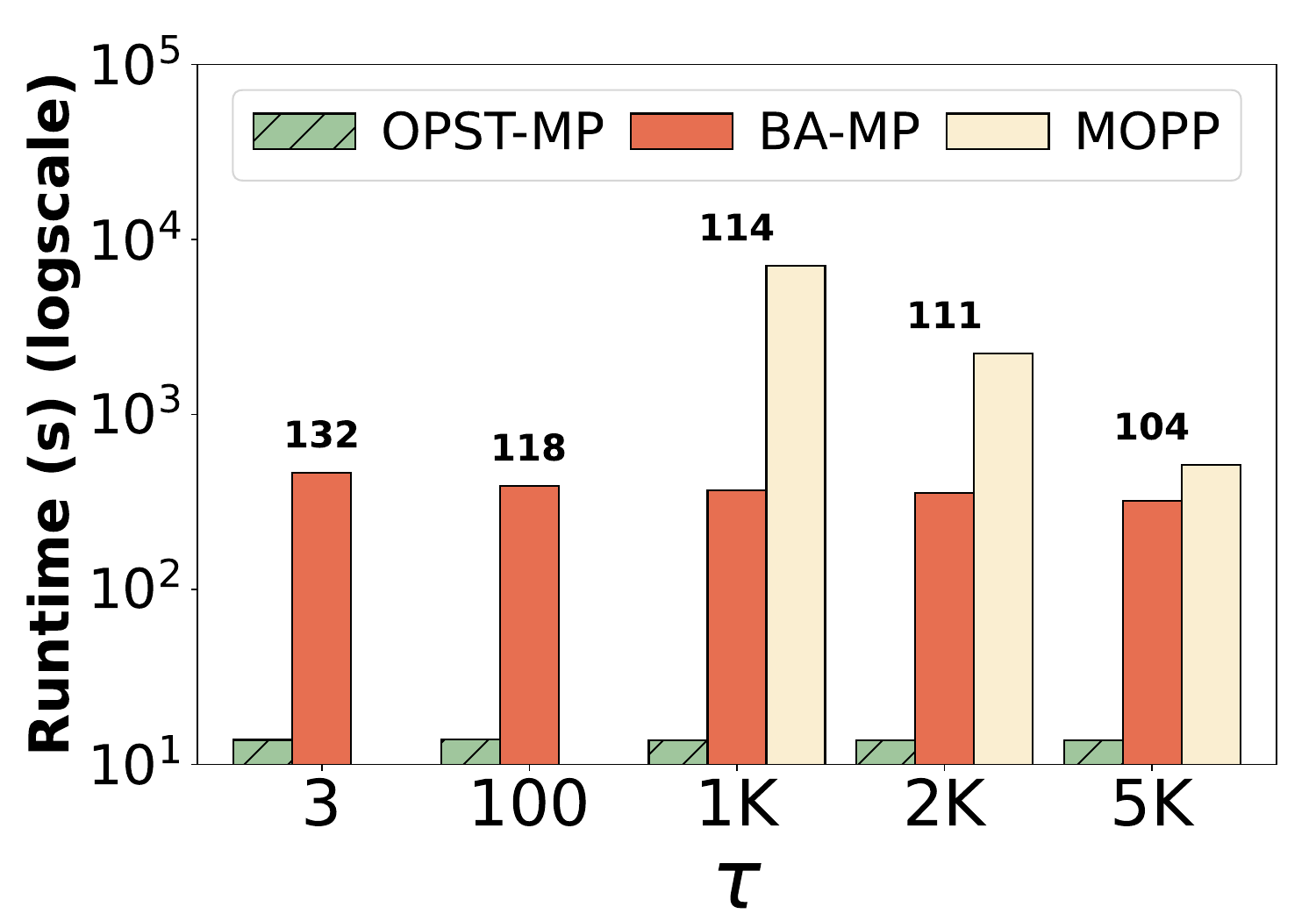}
    \caption{\SOL}\label{fig:runtime_mp_tau}
\end{subfigure}%
\hfill
\begin{subfigure}[t]{0.248\textwidth}
    \centering
    \includegraphics[width=1\linewidth]{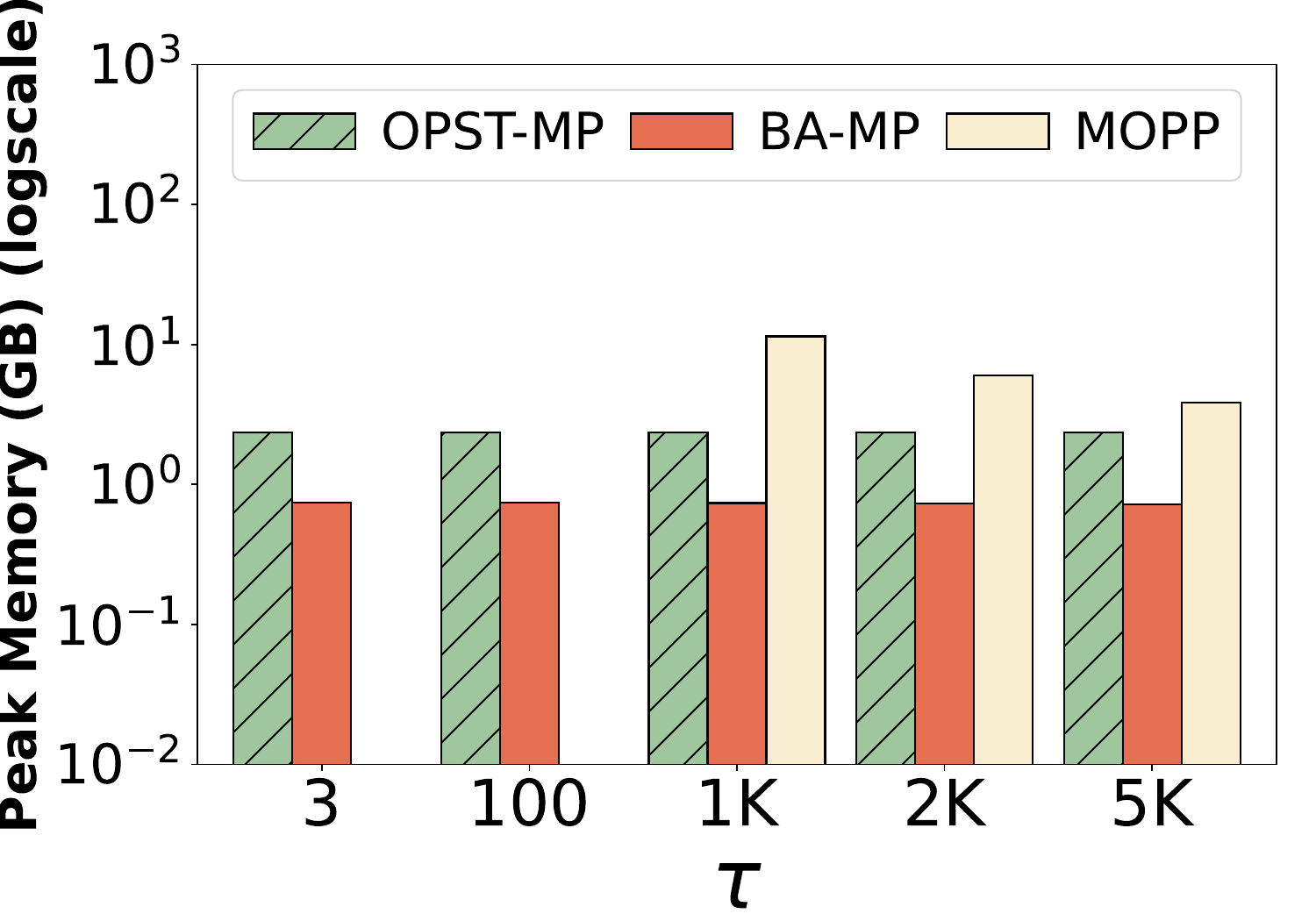}
    \caption{\SOL}\label{fig:runtime_mem_mp_tau}
\end{subfigure}%
\caption{\OPSTMP, \BAMP, and \MOPP: (a) Runtime and 
(b) peak memory consumption for varying $n$. (c) Runtime and 
(d) peak memory consumption for varying $\tau$. Missing bars for \BAMP and \MOPP indicate that they \emph{did not finish within 24 hours.} The value above each pair of bars in (a) and (c) represents the maximum length $k$ of all $\tau$-maximal $\tau$-frequent OP patterns.}

\centering 
\begin{subfigure}[t]{0.245\textwidth}
    \centering
    \includegraphics[width=\linewidth]{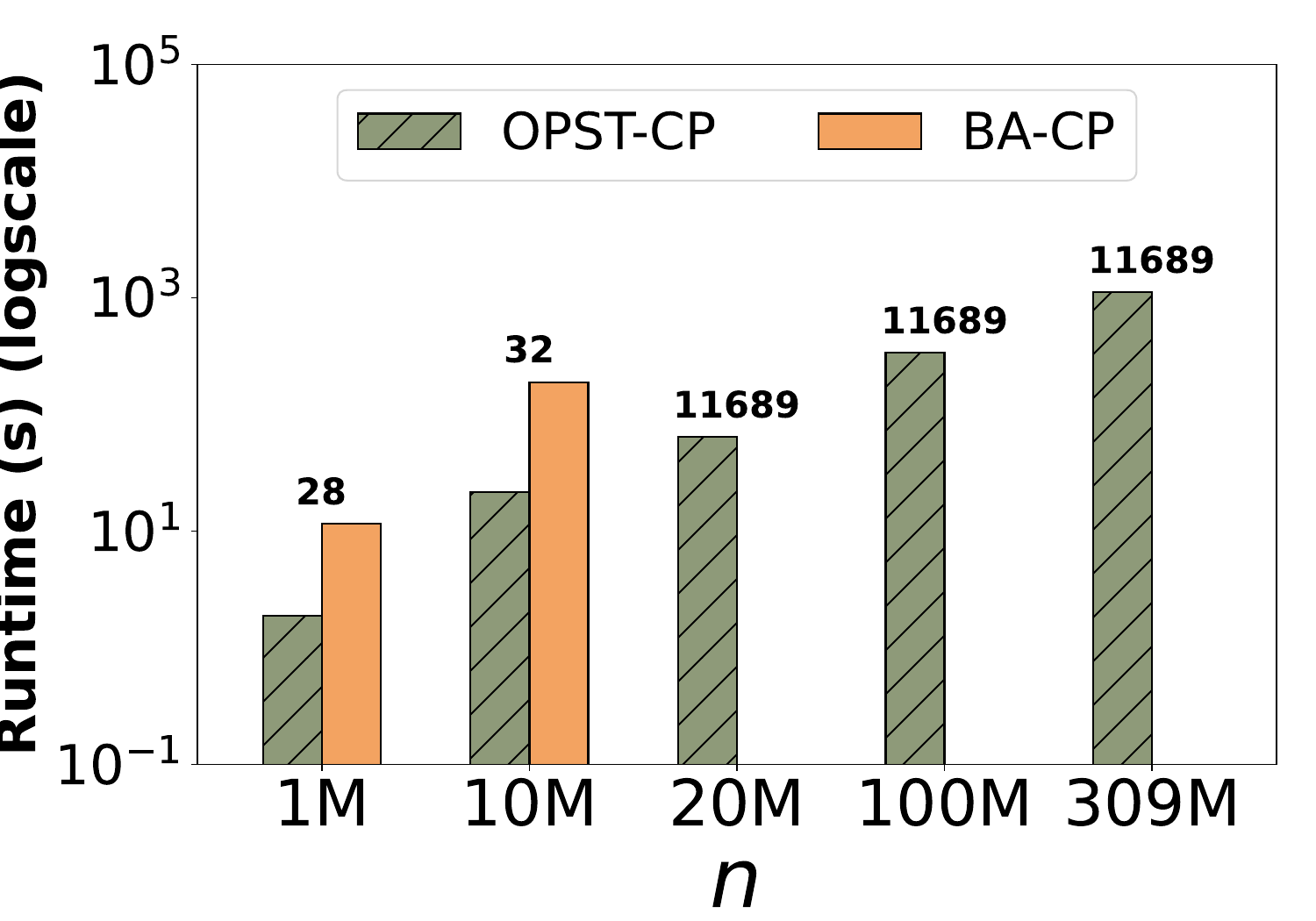}
    \caption{\WHA}\label{fig:opstcp:a}
\end{subfigure}
\hfill
\begin{subfigure}[t]{0.245\textwidth}
    \centering
    \includegraphics[width=\linewidth]{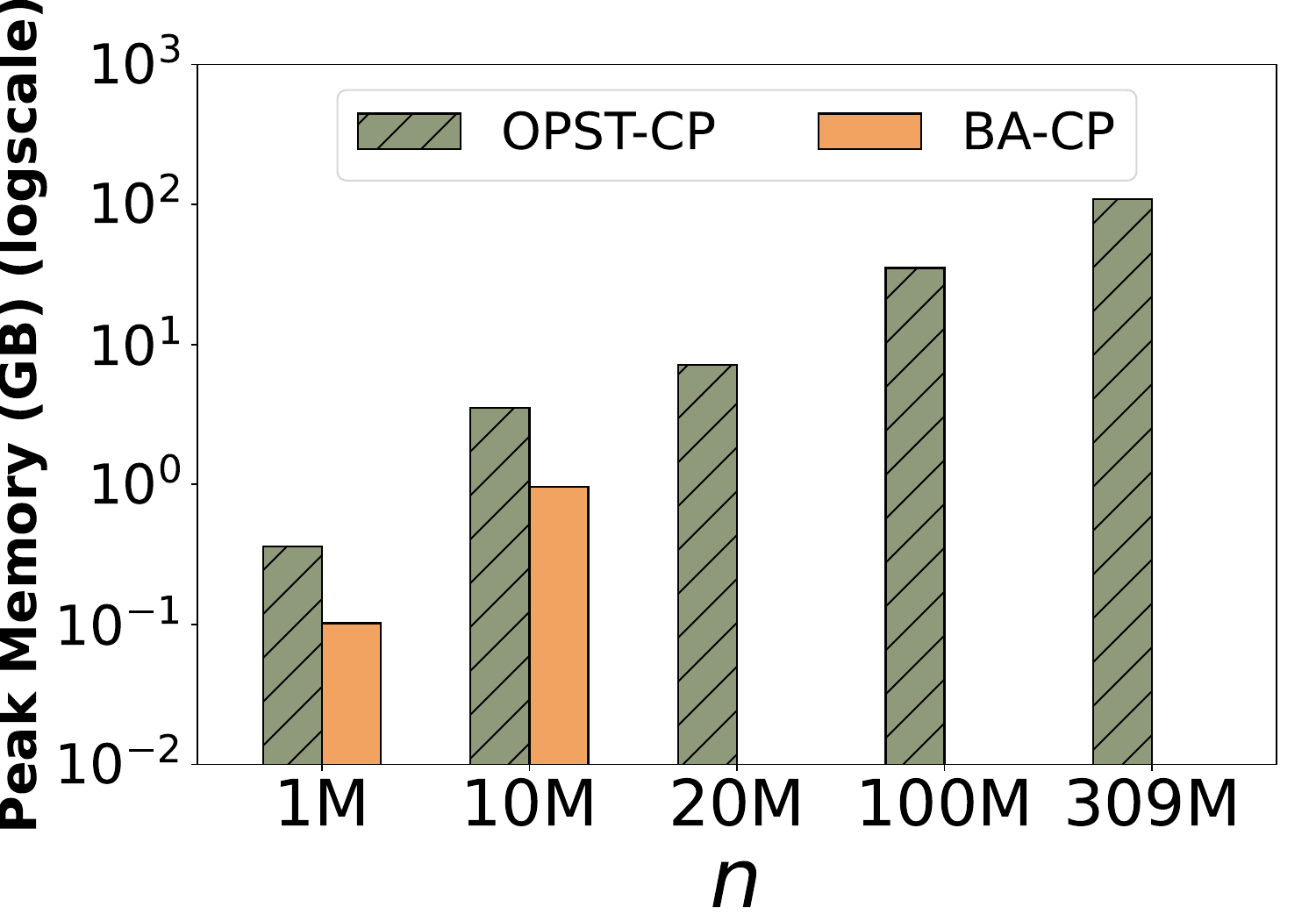}
    \caption{\WHA}\label{fig:opstcp:b}
\end{subfigure}
\hfill
\begin{subfigure}[t]{0.245\textwidth}
    \centering
    \includegraphics[width=\linewidth]{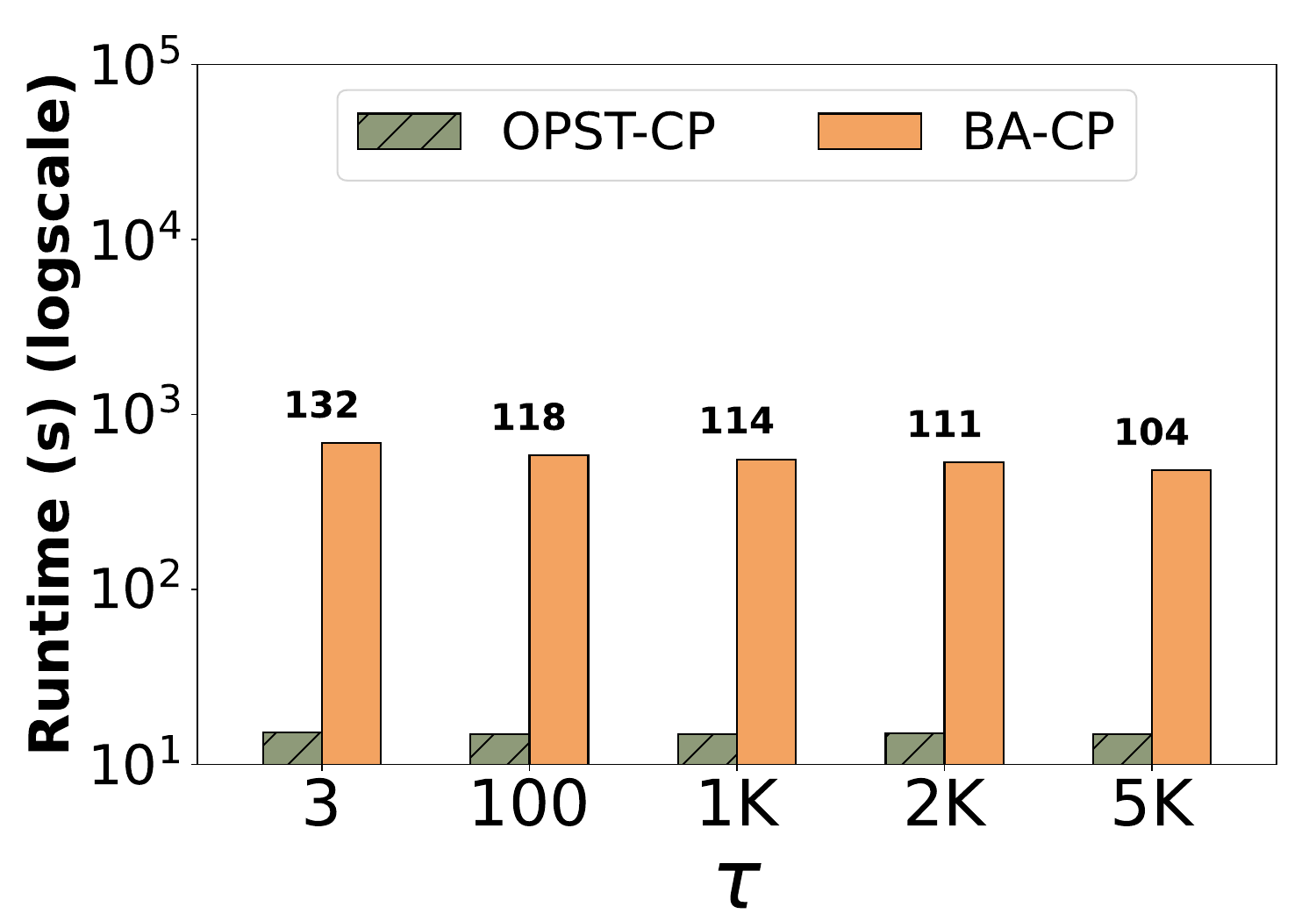}
    \caption{\SOL}\label{fig:opstcp:c}
\end{subfigure}%
\begin{subfigure}[t]{0.245\textwidth}
    \centering
    \includegraphics[width=\linewidth]{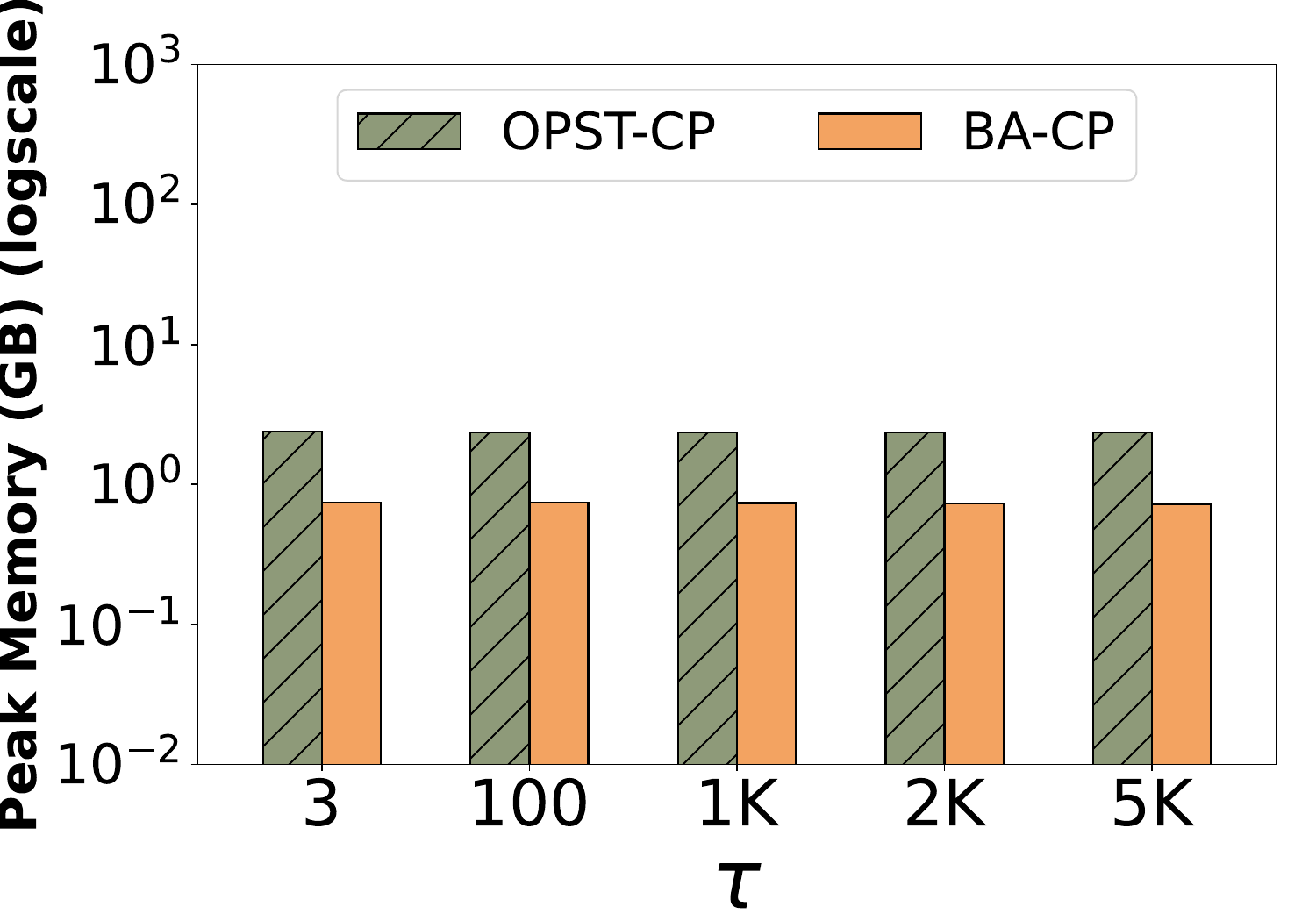}
    \caption{\SOL}\label{fig:opstcp:d}
\end{subfigure}%

 \caption{\OPSTCP and \BACP: (a) Runtime and 
(b) peak memory consumption for varying $n$. (c) Runtime and 
(d) peak memory consumption for varying $\tau$.  Missing bars for \BACP indicate that it \emph{did not finish within 24 hours.} The value above each pair of bars in (a) and (c) is  the maximum length $k$ of all closed $\tau$-frequent OP patterns.}\label{fig:opstcp}
\end{figure*}
\begin{figure}[!ht]
\centering 
\begin{subfigure}[t]{0.35\textwidth}
    \centering
    \includegraphics[width=1\linewidth]{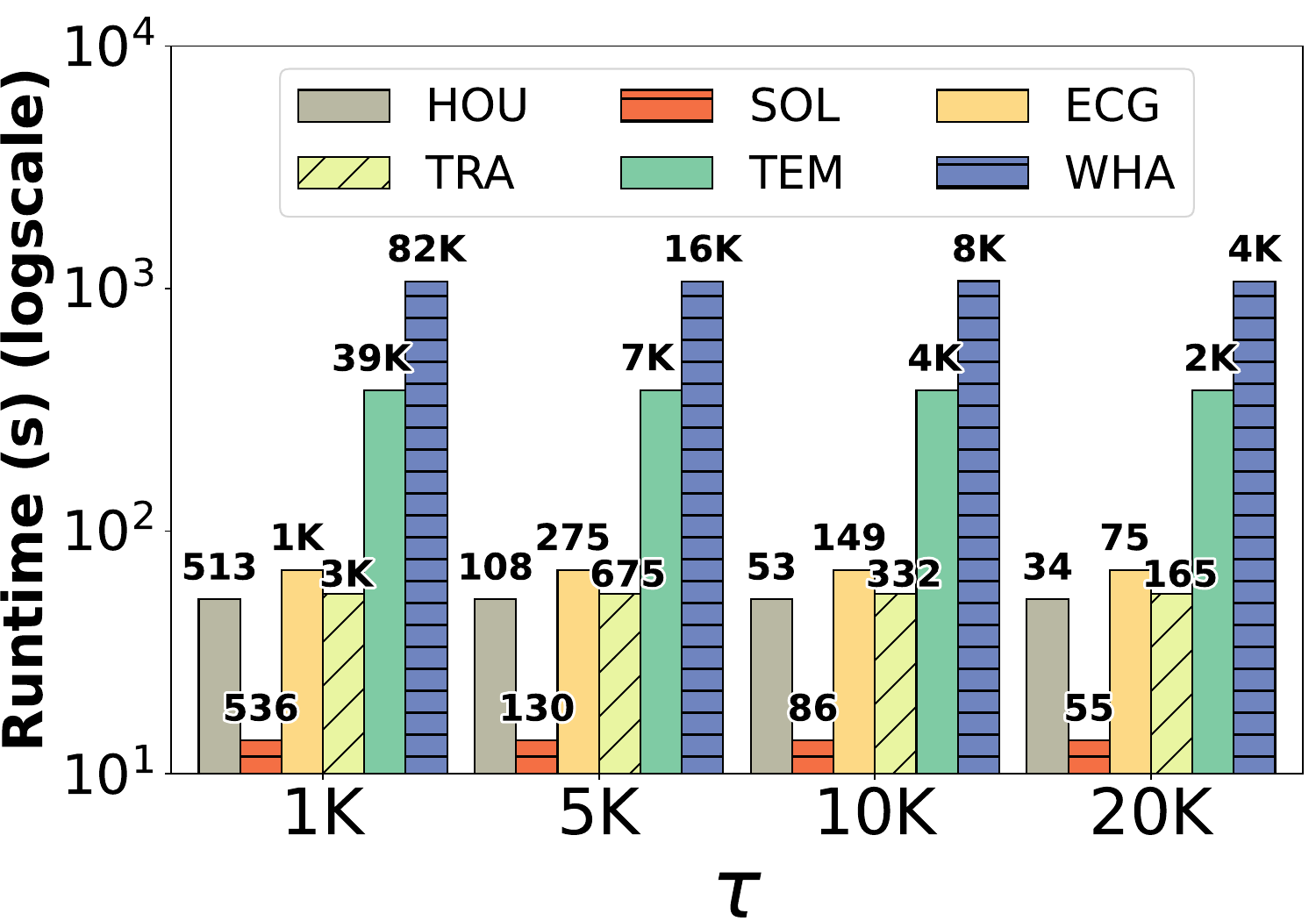}
    \caption{All datasets}\label{fig:opstmp:all_time}
\end{subfigure}%
\quad
\begin{subfigure}[t]{0.35\textwidth}
    \centering
    \includegraphics[width=1\linewidth]{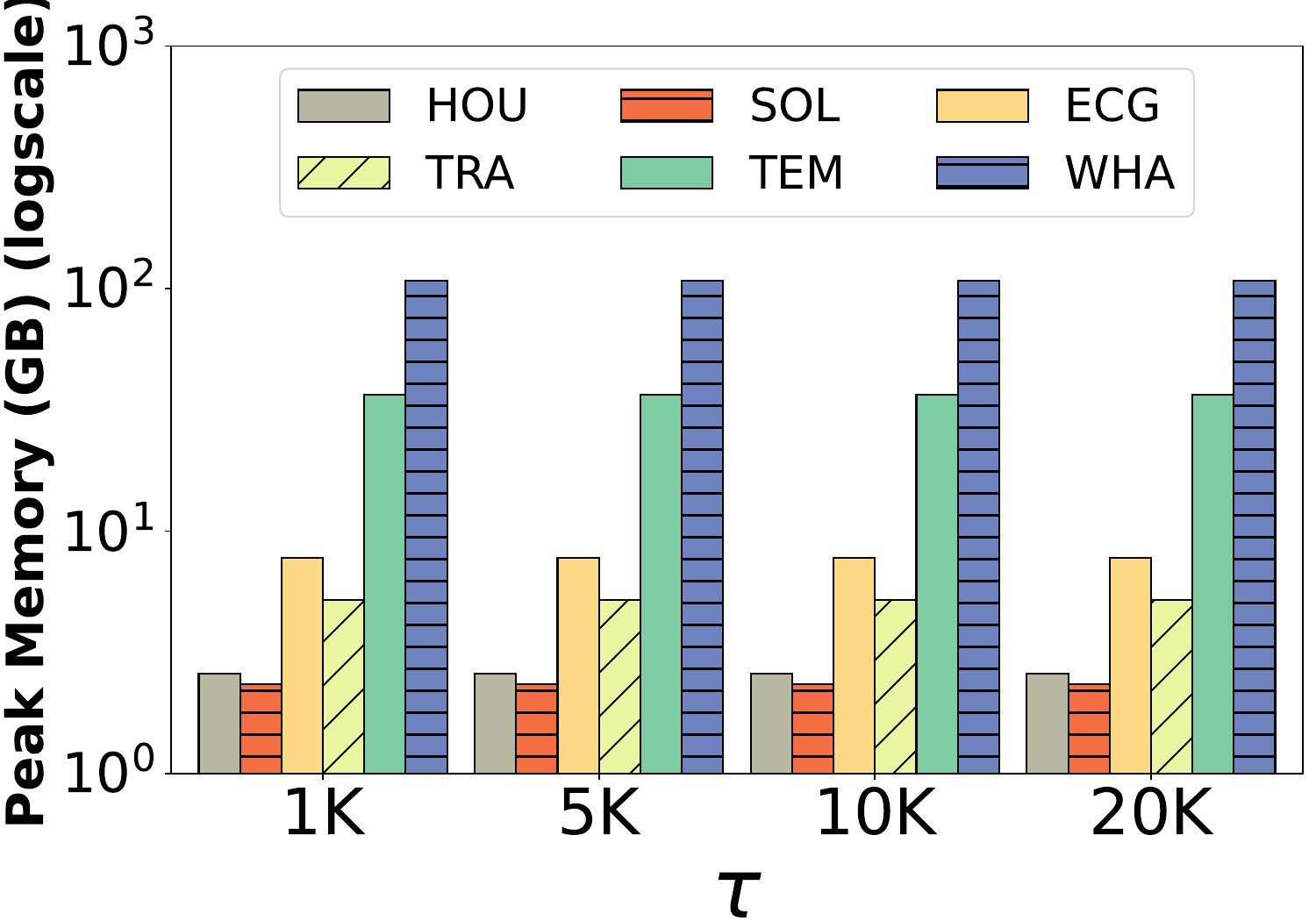}
    \caption{All datasets}\label{fig:opstmp:all_mem}
\end{subfigure}%
\caption{\OPSTMP: (a) Runtime and (b) peak memory consumption on all datasets for varying $\tau$.  The value above each bar in (a) is the number of $\tau$-maximal $\tau$-frequent OP patterns. \BAMP and \MOPP are omitted; they were slower than \OPSTMP by more than one order of magnitude on average.
}\label{fig:opstmp:all_tau}
\end{figure}

\paragraph{Maximal Frequent OP Pattern Mining.}~Figs.~\ref{fig:runtime_mp_n} and~\ref{fig:mem_mp_n} show the impact of $n$ on the runtime and space, respectively, of 
all algorithms. 
 \OPSTMP was \emph{orders of magnitude faster} than both  competitors and \emph{much more scalable with $n$}. For example, \OPSTMP required $1071$ seconds for the entire \WHA dataset ($n\approx 309\cdot 10^6$), whereas \BAMP and \MOPP did not finish within $24$ hours. Specifically, 
\BAMP did not finish within $24$ hours when $n\geq 20\cdot 10^6$ since $k > 11000$ ($k$ increased due to a long sequence of repetitions of the same letter), and \MOPP did not finish within $24$ hours when $n \geq 10\cdot10^6$. 
Interestingly, only $21.4\%$ on average (and up to $25.5\%$) of the runtime of \OPSTMP was spent to mine the patterns 
and the rest to construct the \OPST. This is encouraging as the \OPST can be constructed once and the mining can be performed several times with different $\tau$ values. 
In terms of space, both \OPSTMP and \BAMP scaled linearly with $n$, in line with their $\cO(n)$ space complexity, but the former required more space, due to the underlying index. However, \OPSTMP required \emph{more than one order of magnitude less space} than \MOPP, as expected (recall that the space complexity of the latter is $\Omega(n^2)$ in the worst case). 
Similar results were obtained for all other datasets, which have smaller $n$ and $\sigma$ than \WHA (see \cite{Supplement2024}). 

Figs.~\ref{fig:runtime_mp_tau} and \ref{fig:runtime_mem_mp_tau} show the impact of $\tau$ on the runtime and space, respectively, of all algorithms. As expected, the runtime of \OPSTMP was not affected by $\tau$, whereas 
both competitors needed more time for small $\tau$ values. 
Specifically, \BAMP was slightly slower for small $\tau$ because $k$ increased,  
and \MOPP was substantially slower (e.g., it did not finish within $24$ hours for $\tau\leq100$) due to a much larger $L$ 
(see its time complexity in Section~\ref{sec:related}).   
For example, $L=361$ when $\tau=5000$ and $L=1600$ when $\tau=1000$.   
Again, a small percentage of the runtime of \OPSTMP ($19.1\%$ on average) was  spent to mine the patterns. In these experiments,  
\OPSTMP required more space than \BAMP, due to its index, but less space than \MOPP, as expected by its space complexity. Similar results were obtained for all other datasets (see Fig.~\ref{fig:opstmp:all_tau}, which also shows the number of $\tau$-maximal $\tau$-frequent OP patterns). 

\begin{figure}[t]
\centering 
\begin{subfigure}[t]{0.35\textwidth}
    \centering
    \includegraphics[width=1\linewidth]{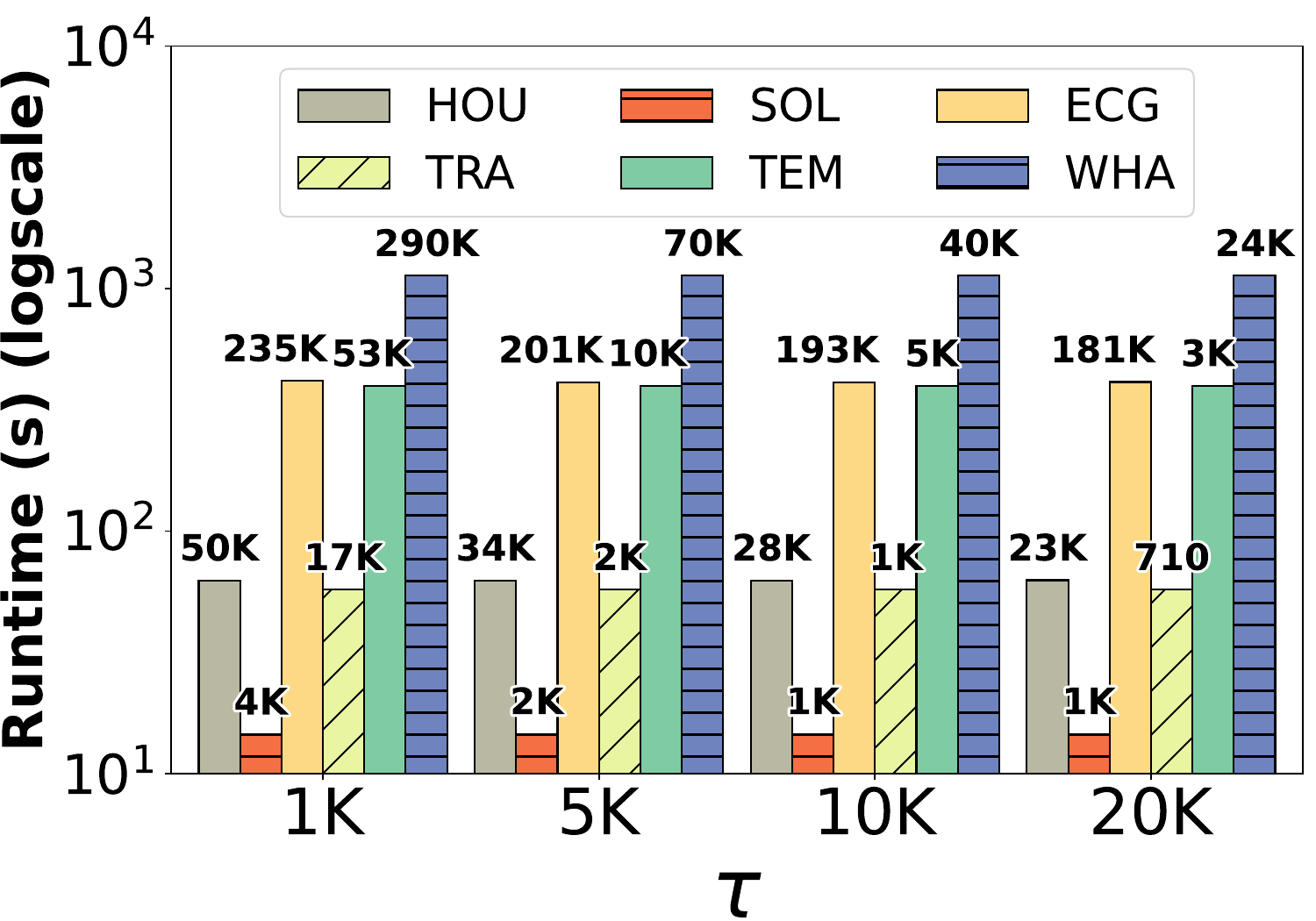}
    \caption{All datasets}\label{fig:opstcp:all_time}
\end{subfigure}%
\quad
\begin{subfigure}[t]{0.35\textwidth}
    \centering
    \includegraphics[width=1\linewidth]{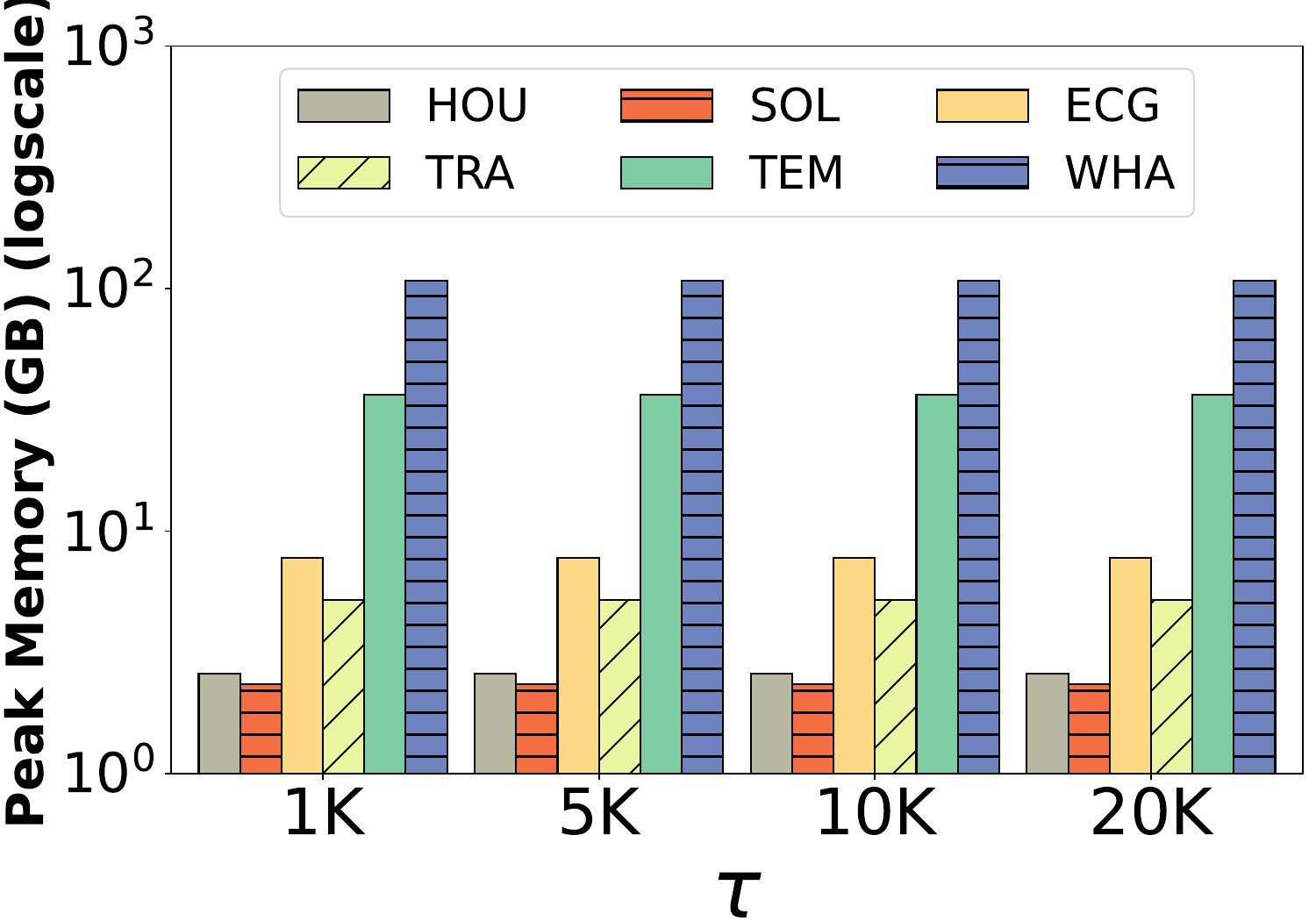}
    \caption{All datasets}\label{fig:opstcp:all_mem}
\end{subfigure}%
\caption{\OPSTCP: (a) Runtime and (b) peak memory consumption on all datasets for varying $\tau$.  The value above each bar in (a) is the number of closed $\tau$-frequent OP patterns. \BACP is omitted, as it was slower than \OPSTCP by more than one order of magnitude on average across all datasets.}\label{fig:opstcp:alldata}
\end{figure}

\paragraph{Closed Frequent OP Pattern Mining.} Figs.~\ref{fig:opstcp:a} and~\ref{fig:opstcp:b} show the impact of $n$ on runtime and space, respectively, of \OPSTCP and of the baseline \BACP. 
The results are in line with the time complexities of these algorithms. That is, \OPSTCP is much more scalable than \BACP. For example, when $n \approx 309\cdot 10^6$, \OPSTCP needed $1130$ seconds, while \BACP did not finish within $24$ hours when $n\geq 20\cdot 10^6$, as $k>11000$. Again, a relatively small percentage of the total time of \OPSTCP ($26\%$ on average and up to $31.5\%$) was spent for mining the patterns. In terms of space, both \OPSTCP and \BACP scaled linearly with $n$, in line with their space complexity, but \OPSTCP required more space, due to its index.  
Similar results were obtained for all other datasets 
(see \cite{Supplement2024}).

Figs.~\ref{fig:opstcp:c} and~\ref{fig:opstcp:d} show the impact of $\tau$ on runtime and space, respectively, of \OPSTCP and \BACP. The results are 
in line with the
time complexities of these algorithms. That is, the runtime of \OPSTCP was 
not affected by $\tau$ but that of \BACP was $1.4$ times larger for $\tau=3$ compared to $\tau=5000$ due to the larger $k$ ($132$ vs. $104$). 
Again, only $26\%$ of the total time of \OPSTCP on average was spent to mine the patterns and the rest for constructing the index. Furthermore, the space of \OPSTCP was larger than \BACP, due to its index. Similar results were obtained for all other datasets (see Fig.~\ref{fig:opstcp:alldata}). Despite being a superset of maximal frequent OP patterns, closed frequent OP patterns can still be mined very fast even for small $\tau$, thanks to our index. 
In Fig.~\ref{fig:opstcp:all_time}, the closed frequent OP patterns are $265$ times more on average compared to the maximal in Fig.~\ref{fig:opstmp:all_time} but their mining is less than $2$ times slower. 

\paragraph{Case Studies.}~We show the benefit of clustering based on maximal frequent OP patterns, following the methodology of~\cite{xindong_opp,tkde_opp}: we perform pattern-based clustering of a dataset comprised of multiple strings; the patterns are the features, the frequency of a pattern in a string is the feature value, and the $k$-means clustering algorithm is used to cluster the feature matrix. We confirm the findings of~\cite{xindong_opp,tkde_opp} on large-scale data: OP pattern-based clustering greatly outperforms clustering the raw data where each time series element is a feature. 

\begin{table}[ht]
\caption{Pattern-based vs raw data clustering.}\label{tab:clustering}

\centering
\begin{tabular}{|c||c|c|c|c|}
\hline
\textbf{Dataset}                & \textbf{Approach}       & \textbf{NMI}    & \textbf{h}      & \textbf{RI}     \\ \hline \hline
\multirow{2}{*}{\CCT}        & Raw Data Clustering          & 0.45  & 0.46   & 0.74   \\ \cline{2-5} 
                                     & Max. OP Pattern-based ($\tau=8$)  & \textbf{1}     & \textbf{1}      & \textbf{1}     \\ \hline
\multirow{2}{*}{\WAF}               & Raw Data Clustering           & $9\cdot10^{-7}$    & $1\cdot10^{-6}$      & 0.55   \\ \cline{2-5} 
                                     & Max. OP Pattern-based ($\tau=9$)  & \textbf{1}      & \textbf{1}     & \textbf{1}     \\ \hline
\end{tabular}
\end{table}

\begin{figure}[t]
\centering
\begin{subfigure}[t]{0.35\textwidth}
    \centering
    \includegraphics[width=\linewidth]{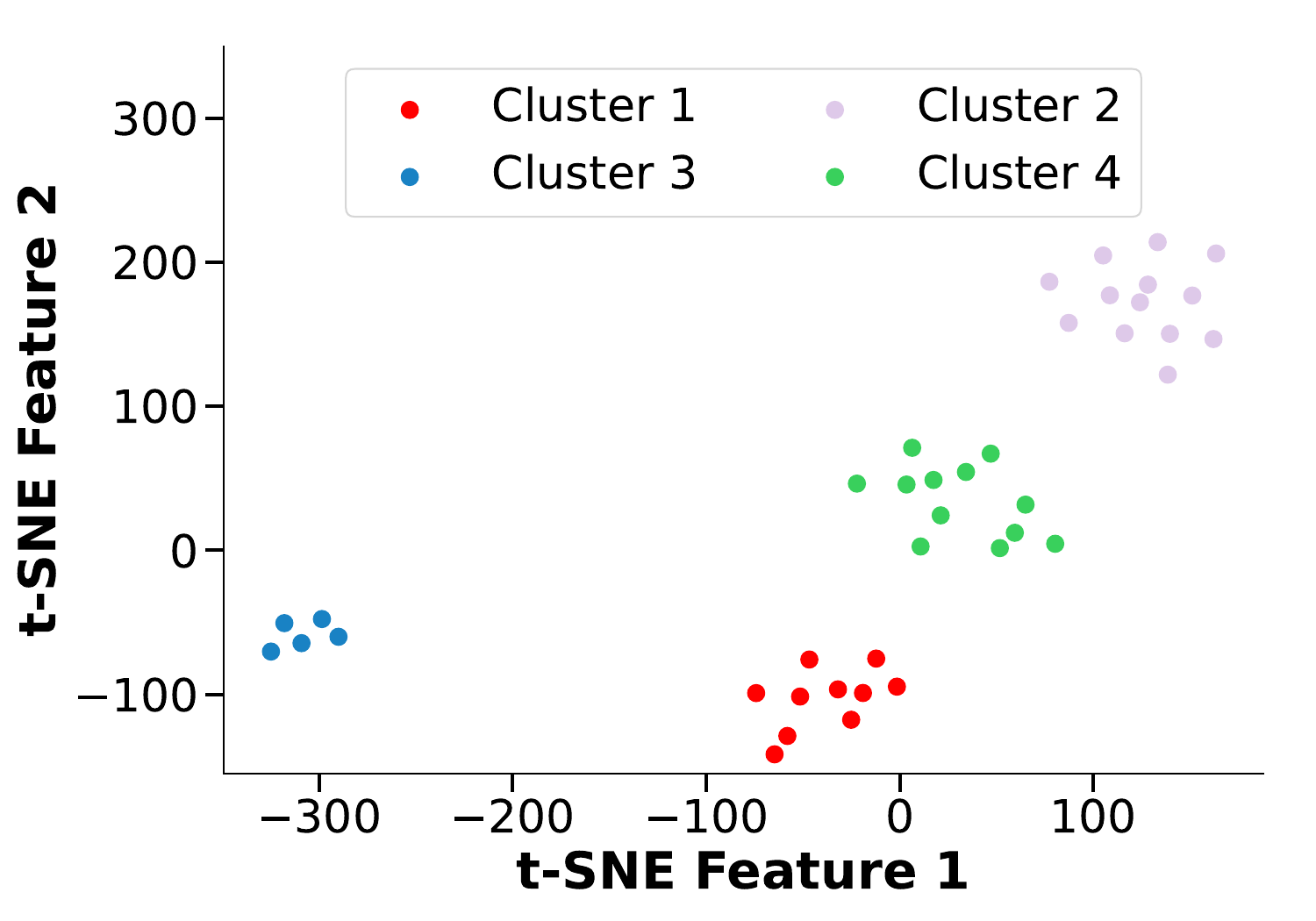}
    \captionsetup{skip=1pt}
    \caption{\CCT ($k=4$, $\tau=8$)}
\end{subfigure}%
\quad
\begin{subfigure}[t]{0.35\textwidth}
    \centering
    \includegraphics[width=\linewidth]{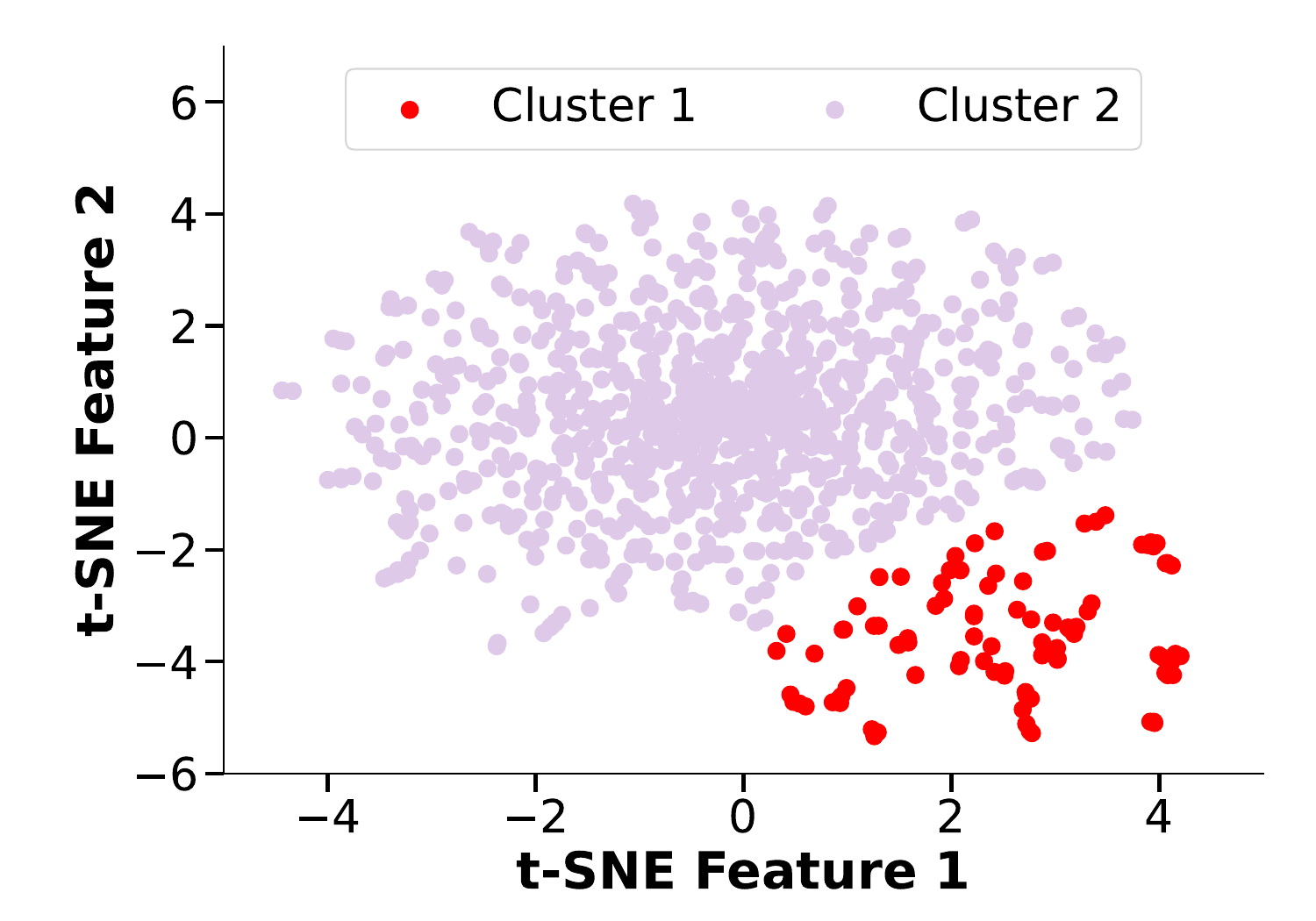}
    \captionsetup{skip=1pt}
    \caption{\WAF ($k=2$, $\tau=9$)}
\end{subfigure}%

\caption{t-SNE visualization of pattern-based clustering.} 
\label{fig:cluster_pub}
\end{figure}

First, we used $2$ UCR datasets, CinCECGTorso (\CCT) and Wafer (\WAF), which have ground truth. We evaluated how similar is the clustering obtained from pattern-based or  raw data clustering to the ground truth using three well-known measures:  Normalized Mutual Information (NMI), homogeneity (h) and Rand Index (RI) (see \cite{Supplement2024} for details). These measures take values in $[0,1]$ (higher values are preferred). 
Table~\ref{tab:clustering} shows that pattern-based clustering significantly outperformed the raw data clustering. Additionally, we used t-SNE~\cite{van2008visualizing} to project the pattern-based clustering results onto a $2$D space. The results in Fig.~\ref{fig:cluster_pub}  
illustrate 
the high quality of our approach. 

Then, we used our \ECG dataset, which was converted to a collection of strings, one per participant. We applied \OPSTMP 
with $\tau = 68000$ obtaining $25$ maximal frequent OP patterns (features), and $k$-means with $k=2$ to cluster the feature matrix. The clusters were evaluated by a domain expert and found to be coherent and well-separated, indicating a high-quality result, unlike those obtained by raw data clustering.

\begin{table}[!t]
\centering
\caption{Mean and/or standard deviation in 5 key attributes for the clusters obtained by pattern-based clustering.}\label{tab:maria}
\begin{tabular}{|l||c|c|}
\hline
 {\bf Attribute}                    & {\bf Cluster 1}   & {\bf Cluster 2}   \\ \hline \hline
A1. Pain Tolerance     & 4, 2        & 4.42, 2.09  \\ \hline
A2. Exhaustion Tolerance          & 11.13, 3.09 & 11.81, 2.66 \\ \hline
\begin{tabular}[c]{@{}l@{}} A3. Maximum Voluntary Contraction\end{tabular} & 12.06, 7.82 & 10.74, 5.02 \\ \hline
A4. Training Hours / Week & 3.13   & 1.96  \\ \hline
A5. Age             & 25.63, 8.12 & 22.06, 4.49 \\ \hline
\end{tabular}
\end{table}

As can be seen in Table~\ref{tab:maria}, the  participants in Cluster 1 are different from 
those in Cluster 2, along 5 key attributes, A1 to A5, and these differences were explained by the domain expert as follows. The differences in A1 and A2 may be because the participants in Cluster 1 had greater muscle strength (see A3) and were more active (see A4). Furthermore, the participants in Cluster 1 performed the task for 4 minutes and 35 seconds on average before withdrawing from the exercise entirely, while those in Cluster 2 lasted much longer and performed the task for 5 minutes and 45 seconds on average. Both clusters were gender-balanced. According to the domain expert, the results suggest that participants who exercise more are more attuned to bodily sensations, allowing them to protect their bodies from potentially damaging stimuli at an earlier stage. 

\section*{Acknowledgments}
This work is supported by the PANGAIA and ALPACA projects that have received funding from the European Union’s Horizon 2020 research and innovation programme under the Marie Skłodowska-Curie grant agreements No 872539 and 956229, respectively.
L.~L.~is supported by a CSC Scholarship.
We thank Jakub Radoszewski for pointing us to~\cite{DBLP:journals/siamcomp/ColeH03a}.

\bibliographystyle{plain}
\bibliography{references}

\end{document}